\documentclass[a4paper,UKenglish,cleveref,autoref,thm-restate]{lipics-v2021}

\pdfoutput=1 \hideLIPIcs

\usepackage{xspace}
\usepackage[disable]{todonotes}
\usepackage{cleveref}
\usepackage{mathtools}

\theoremstyle{plain}

\Crefname{redrule}{Solution Normalization}{Solution Normalization}

\usepackage{etoolbox}
\newtoggle{long}\toggletrue{long} \iftoggle{long}{
	\NewDocumentEnvironment{prooflater}{m}{\expandafter\global\expandafter\def\csname#1\endcsname{}
    \begin{proof}}{\end{proof}}
    
	\NewDocumentEnvironment{proofsketch}{o +b}{}{}
	\newcommand{\restateref}[1]{}
    \newcommand{\restaterefName}[2]{[#2]}
}{
	\NewDocumentEnvironment{prooflater}{m +b}{ \expandafter\global\expandafter\def\csname#1\endcsname{\begin{proof}#2\end{proof}}}{}
	\NewDocumentEnvironment{proofsketch}{O{Proof sketch.}}{\begin{proof}[#1]}{\end{proof}}
	\usepackage{apptools}
\newcommand{\restateref}[1]{[\IfAppendix{\hyperref[#1]{$\star$}}{\hyperref[#1*]{$\star$}}]}
    \newcommand{\restaterefName}[2]{[#2, \IfAppendix{\hyperref[#1]{$\star$}}{\hyperref[#1*]{$\star$}}]}
}

\let\oldrestatable\restatable
\def\restatable{\expandafter\oldrestatable}

\newcommand{\cplan}{{\normalfont\textsc{Clus}\-\textsc{tered} \textsc{Pla}\-\textsc{nari}\-\textsc{ty}}\xspace}
\newcommand{\lplan}{{\normalfont\textsc{Level} \textsc{Pla}\-\textsc{nari}\-\textsc{ty}}\xspace}

\newcommand{\yclplan}{{\normalfont($y$-)\textsc{mono}\-\textsc{tone} \textsc{Clus}\-\textsc{tered} \textsc{Level} \textsc{Pla}\-\textsc{nari}\-\textsc{ty}}\xspace}
\newcommand{\cclplan}{{\normalfont\textsc{Con}\-\textsc{vex} \textsc{Clus}\-\textsc{tered} \textsc{Level} \textsc{Pla}\-\textsc{nari}\-\textsc{ty}}\xspace}

\newcommand{\shortyclplan}{{\normalfont\textsc{mCLP}}\xspace}
\newcommand{\shortcclplan}{{\normalfont\textsc{cCLP}}\xspace}

\def\yclp{\shortyclplan}
\def\cclp{\shortcclplan}

\DeclareMathOperator{\att}{att}

\DeclareMathOperator{\fr}{fr}
\DeclareMathOperator{\cl}{cl}
\DeclareMathOperator{\vc}{vc}

\graphicspath{{./graphics/}, {./graphics/pdf}}

\title{Monotone Clustered Level Planarity}

\author{Simon D. Fink}{Algorithms and Complexity Group, Technische Universität Wien, Austria}{sfink@ac.tuwien.ac.at}{https://orcid.org/0000-0002-2754-1195}{}

\author{Matthias Pfretzschner}{Faculty of Computer Science and Mathematics, University of Passau, Germany}{pfretzschner@fim.uni-passau.de}{https://orcid.org/0000-0002-5378-1694}{}

\author{Ignaz Rutter}{Faculty of Computer Science and Mathematics, University of Passau, Germany}{rutter@fim.uni-passau.de}{https://orcid.org/0000-0002-3794-4406}{}

\author{Marie Diana Sieper}{Institute of Computer Science, University of Würzburg, Germany}{marie.sieper@uni-wuerzburg.de}{https://orcid.org/0009-0003-7491-2811}{}

\authorrunning{S.D. Fink, M. Pfretzscher, I. Rutter, M.D. Sieper} 

\Copyright{Simon D. Fink, Matthias Pfretzscher, Ignaz Rutter, and Marie Diana Sieper} 

\ccsdesc[500]{Mathematics of computing~Graph algorithms} \ccsdesc[500]{Theory of computation~Design and analysis of algorithms}
\ccsdesc[500]{Human-centered computing~Graph drawings}

\keywords{Level Planarity, Clustered Planarity, Cluster Augmentation, Parameterized Complexity}

\relatedversion{}

\nolinenumbers

\EventEditors{Maarten L\"{o}ffler and Silvia Miksch}
\EventNoEds{2}
\EventLongTitle{34th International Symposium on Graph Drawing and Network Visualization (GD 2026)}
\EventShortTitle{GD 2026}
\EventAcronym{GD}
\EventYear{2026}
\EventDate{August 19--21, 2026}
\EventLocation{Ontario, Canada}
\EventLogo{}
\SeriesVolume{396}
\ArticleNo{12}

\begin{document}

\maketitle

\begin{abstract}
  We consider the combination of the two constrained planarity problems \textsc{Level-} and \cplan.
  Traditionally, level-planar drawings with convex clusters have been studied in this setting.
  Fink et al.~\cite{fink_clustered_2024} recently introduced a different way of combining level- and clustered planarity by mimicking a classic characterization of clustered planarity in the level-planar setting:
  The problem \yclplan (\shortyclplan) seeks a level-planar drawing in which it is possible to augment each cluster with edges that do not cross cluster boundaries so that it becomes connected while maintaining level-planarity.
  This is in line with previous research on clustered planarity that poses certain requirements on the augmentation edges that make each cluster connected, e.g., that they form a path.
  Fink et al.~\cite{fink_clustered_2024} showed that \yclp is NP-complete even for biconnected single-source graphs and instances with a constant number of levels and clusters.

  We further classify the parameterized complexity of the \yclp problem by, on the one hand, showing hardness even for instances that consist of a forest with trees of bounded size, no isolated vertices, and a small constant number of either clusters or levels.
  This excludes fixed-parameter tractability for almost all graph-structural parameters, except for vertex cover, even in conjunction with the number of clusters.
  We complement this by showing fixed-parameter tractability when parameterizing by the vertex cover number and the number of clusters.
  A major obstacle that we overcome is the fact that \yclp is non-hereditary, i.e., subinstances of yes-instances may be no-instances and vice versa, which makes it challenging to apply usual reduction techniques.

  \iftoggle{long}{}{
  \subparagraph{Generative AI Declaration}
  Generative AI was not used in the preparation of this article.
  }
\end{abstract}

\section{Introduction}
In the constrained planarity setting, we ask whether a given graph can be drawn in the plane such that no two edges cross and some given additional constraints are satisfied.
For example, in \lplan, vertices need to lie on predefined horizontal lines, and in \cplan, given sets of vertices shall be drawn in closed regions that may not overlap.
Such problems can be used to find visualizations that also include further real-world meta-information like hierarchy levels in organizational charts, which may further include non-hierarchical teams and working groups.
Another example are UML diagrams of software source code, where directories or packages induce a hierarchy and, additionally, the assignment to individual software features can provide a grouping.
The constrained planarity problems modelling such applications thus received much attention in the graph drawing community, see e.g.\ \cite{dloz-pgw-15,sch-tat-13} for an overview.

\subparagraph{Level Planarity.}
Formally, the input for the problem \lplan is a \emph{level graph} $(G,\ell)$, that is a graph $G=(V,E)$ and a \emph{leveling function}~$\ell \colon V \to [k]=\{1,2,\ldots,k\}$ with $k\in\mathbb{N}$ that assigns vertices to \emph{levels} such that no two adjacent vertices are on the same level.
We seek a \emph{level planar drawing} of~$(G,\ell)$, that is a crossing-free drawing of~$G$ that positions each vertex~$v$ on the corresponding horizontal line~$\ell(v)$ and draws each edge as a strictly $y$-monotone curve between its endpoints.
A linear-time algorithm for testing \lplan, that is finding such level planar drawing, is known since the 1990s~\cite{jlm-lpt-98}.
We call a level graph $(G,\ell)$ \emph{proper} if $|\ell(u) - \ell(v)| = 1$ for every edge $uv\in E$.

\subparagraph{Clustered Planarity.}
A \emph{flat clustered graph} is a pair $(G, T)$ where $T \colon V(G) \to \mathbb{N}$ is a \emph{flat clustering} of $G$, a (not necessarily injective) function that assigns a natural number to each vertex.
Note that a flat clustering induces a partition of $V(G)$.
A \emph{cluster} of~$(G, T)$ is a natural number $\mu \in \mathbb{N}$ with $T^{-1}(\mu) \neq \emptyset$ and for each $v \in V(G)$, we refer to $T(v)$ as the \emph{cluster of $v$}.
We let $|T|$ denote the number of clusters in $(G, T)$, i.e., $|T| = |\{T(v) \mid v \in V(G)\}|$.

A \emph{clustered planar drawing} of~$(G,T)$ is a planar drawing of~$G$ that assigns each cluster~$\mu$ to a region~$R$ bounded by a simple closed curve such that (i)~$R$~contains exactly the vertices $T^{-1}(\mu)$, (ii)~no two region boundaries intersect, and (iii)~no edge crosses a region boundary more than once.
Especially, this means that an edge crosses a cluster boundary if and only if exactly one endpoint lies inside the cluster.  
A clustered graph is \emph{clustered planar} if it admits such a drawing, and the corresponding decision problem is called \cplan.
We call a clustered graph \emph{cluster-connected} if, for every cluster $\mu$,  $T^{-1}(\mu)$ induces a connected subgraph in~$G$.
Initial solutions to \cplan handled only cluster-connected instances~\cite{fce-htd-95,len-hpt-89}, while the general case remained open for over 30 years.
Among many other results obtained in the meantime (see e.g.\ \cite{dalozzo_exact_2024,di_battista_efficient_2009} for listings), it has been shown that allowing a \emph{non-flat clustering}, that is a laminar family of sets instead of a partition, reduces to the flat case and thus does not increase complexity~\cite{cp-cp-18}.
Furthermore, Cornelsen and Wagner gave an alternative classification showing that a planar drawing~$\mathcal{E}$ of~$G$ is clustered planar with regard to a given clustering if and only if we can insert a set of \emph{augmentation edges} (sometimes also called saturators) into~$\mathcal{E}$ planarly such that $G$ becomes cluster-connected and, for every cluster $\mu$, no cycle induced by~$T^{-1}(\mu)$ encloses a vertex not in~$T^{-1}(\mu)$~\cite{cw-ccc-06} (see also~\cite[Theorem 1]{di_battista_efficient_2009}).
Recently, Fulek and Tóth~\cite{ft-aec-22} presented the first general polynomial-time algorithm, which was soon after improved to quadratic time~\cite{bfr-spw-21}.

\subparagraph{Clustered Level Planarity.}
In this paper, we consider the combination of the problems \textsc{Level-} and \cplan, that is we are given both a leveling and a clustering and seek a \emph{cl-planar} drawing that at the same time satisfies the constraints modelled by both problems; see \Cref{fig:non-cclplan} for an example.
This combination has initially been studied in 2004 by Forster and Bachmaier \cite{fb-clp-04}, who posed the additional requirement of cluster boundaries being convex in the resulting drawing, which is why we refer to this variant as \cclplan (\cclp).
They showed that \cclp is solvable in linear time if the instance is proper 
and \emph{level-connected} (each cluster contains an edge between any pair of adjacent levels it spans).
Angelini et al.~\cite{alb-tio-15} gave a quadratic algorithm for the non-level-connected but proper case and show that the non-proper case is NP-complete.
Fink et al.~\cite{fink_clustered_2024} recently studied the problem without the convexity constraint and gave a polynomial-time algorithm for biconnected graphs with a \emph{single source}, i.e., a single vertex that does not have a neighbor on a lower level.
\begin{figure}[t]
  \begin{subfigure}{.5\textwidth}
    \centering
    \includegraphics[page=1]{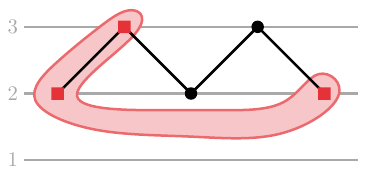}
    \subcaption{}
  \end{subfigure}\begin{subfigure}{.5\textwidth}
    \centering
    \includegraphics[page=2]{graphics/pdf/mclp-example}
    \subcaption{}
  \end{subfigure}
  \caption{
    \textbf{(a)} A drawing that is cl-planar but not monotone-cl-planar. 
    The red vertices represent one cluster, the black vertices another cluster. 
    The underlying $\yclp$-instance is a no-instance, because the rightmost vertex cannot be connected to one of the other two vertices of the red cluster using a $y$-monotone augmentation edge.
    \textbf{(b)} After inserting the vertex~$v$ of the red cluster on level~$1$, the instance becomes a yes-instance of \yclp, as illustrated by the dashed red augmentation edges. 
}
  \label{fig:non-cclplan}
\end{figure}

\subparagraph{Monotone Clustered Level Planarity.}
Fink et al.~\cite{fink_clustered_2024} additionally considered a problem variant of \cclp that replaces the convexity requirement for clusters with a monotonicity requirement.
The problem \yclplan (\shortyclplan) is based on the Cornelsen and Wagner~\cite{cw-ccc-06} classification of cluster planar drawings described above.
A level-planar drawing is called \emph{monotone-cl-planar}~\cite{fink_clustered_2024} if we can insert a set of augmentation edges such that (1) the graph becomes cluster-connected, (2) for every cluster~$\mu$, no cycle induced by $T^{-1}(\mu)$ encloses a vertex not in $T^{-1}(\mu)$, and (3) the drawing remains level-planar.
Note that this requires the augmentation edges to also be drawn $y$-monotone.
The difference to the edges of $G$ is that augmentation edges may also be drawn horizontally, that is, connect consecutive vertices on the same level.
This is in line with other research posing restrictions on the augmentation edges in a cluster planar drawing, e.g., requiring them to form a path~\cite{dalozzo_exact_2024}.
Fink et al.\ \cite{fink_clustered_2024} already showed that \yclp is NP-hard even for biconnected 
single-source graphs and for instances with at most five levels and at most two clusters\footnote{In the non-flat setting their ``single non-trivial cluster'' that is distinct from the root cluster encompassing all vertices is equivalent to having two clusters in our flat clustered setting.}.

\subparagraph{Contributions and Structure.}

\begin{figure}[t]
	\centering
	\includegraphics[page=1]{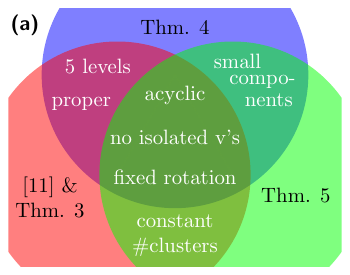}
    \caption{
	Overview of the combinations of properties for NP-hard input instances.
    }
    \label{tab:hardness-prop-combs}
\end{figure}

We build on this latter result and in \Cref{sec:complexity} show that the problem is also hard if all connected components are trees of constant size, the instance contains no isolated vertices, and either the number of levels or the number of clusters is bounded by a small constant.
\Cref{tab:hardness-prop-combs} gives an overview of (combinations of) parameters that we show hardness for.
Note that these results imply para-NP-hardness for the majority of common graph-structural parameters, even in conjunction with the number of clusters or the number of levels.
This essentially only leaves the \emph{vertex cover number} 
(the minimum number of vertices whose deletion removes all edges) 
as an interesting graph-structural parameter.

While we do not settle the tractability of the vertex cover number on its own, we show in \Cref{sec:fpt} that \yclp is FPT with respect to the vertex cover number plus the number of clusters.
Here, the biggest challenge is the fact that \yclp is non-hereditary (see \Cref{fig:non-cclplan}).
This makes it particularly difficult to formulate safe reduction rules for kernelization, as removing a vertex of the graph can turn a yes-instance into a no-instance and vice versa.
Therefore, we instead show fixed-parameter tractability using a different approach.

Instead of reducing the size of an input instance via kernelization, we work with so-called solution normalizations, which reduce the size of a given solution.
This way we can show that a \yclp-instance is a yes-instance if and only if it has a subinstance of bounded size that admits a (special) solution.
While this immediately implies an XP-algorithm for \yclp, we cannot enumerate all such subinstances in FPT time.
Instead, we consider so-called blueprints, which are \yclp instances that coincide with the original instance only on an important ``core'' subgraph.
All other vertices of the blueprint are dummy-vertices with only very abstract leveling information.
We can then test whether a given blueprint can be realized by associating these dummy-vertices with vertices of the input graph using dynamic programming.
Since the number of blueprints is bounded, we can enumerate all of them in FPT time and report a yes-instance if and only if one of them is realizable.
Together, our results significantly extend our understanding of the parameterized complexity of \yclp.
\iftoggle{long}{}{
The full proofs of results marked with a (clickable) star are deferred to the appendix.
}

\section{Preliminaries}
\label{sec:preliminaries}
Let $\mathcal I =(G, \ell, T)$ be a clustered level graph with leveling function $\ell$ and flat clustering~$T$.
A \emph{monotone-cl-planar embedding} of $\mathcal I$ is an equivalence class of monotone-cl-planar drawings of $\mathcal I$, where two drawings are equivalent if and only if the order of vertices and edge-level crossings, including the augmentation edges, coincides in both drawings on every level.
Note that it is easy to obtain a corresponding monotone-cl-planar drawing from a monotone-cl-planar embedding and vice versa.
We refer to a monotone-cl-planar drawing $\Gamma$ of $\mathcal I$ as a pair $(\mathcal E, \mathcal A)$, where $\mathcal E$ is the restriction of $\Gamma$ to $G$ and $\mathcal A$ is the restriction of $\Gamma$ to the augmentation edges. 
For two vertices $u, v \in V(G)$ with $T(u) = T(v) = \mu$, a \emph{cluster-connection between $u$ and $v$} is a $uv$-path in $\mathcal E \cup \mathcal A = \Gamma$ that consists of vertices that belong to $\mu$.

For $v \in V(G)$, we let $N(v) = \{u \in V(G) \mid \{u,v\} \in E(G)\}$ denote the \emph{open neighborhood} of~$v$ in $G$, i.e., the neighbors of $v$ in $G$ (excluding $v$ itself).
For a path $P$ and two vertices $u, v$ of $P$, we let $P[u,v]$ denote the subpath of $P$ that starts at $u$ and ends at $v$.
We need the following two elementary observations.

\begin{restatable}\restateref{prop:sequences}{proposition}{propSequences}
    \label{prop:sequences}
    Let $\phi = (v_1, \dots, v_n)$ be a sequence of $n \geq 3$ distinct elements and let $C \subseteq \{(v_i, v_{i+1}) \mid 1 \leq i < n\}$ be a set of consecutive pairs in $\phi$.
    If $n \geq 2 |C| + 3$, there exists a contiguous subsequence $(a, b, c)$ of $\phi$ such that $(a,b), (b,c) \notin C$.
\end{restatable}
\begin{prooflater}{proofSequences}
    Note that $\phi$ contains exactly $n-2$ contiguous subsequences of length 3.
    Since each pair of $C$ is contained in at most two of these contiguous subsequences, $n - 2 > 2|C|$ implies that at least one contiguous subsequence of length 3 contains no element of $C$.
\end{prooflater}
\begin{restatable}\restateref{prop:polygon-diagonal}{proposition}{propPolygonDiagonal}
\label{prop:polygon-diagonal}
    Let $K$ be a simple $n$-gon with $n \geq 5$ and let $v$ be a vertex of $K$.
    Then there exist two non-adjacent vertices $x, y \neq v$ of $K$ such that a $y$-monotone curve between $x$ and $y$ can be drawn in the closure $\overline{K}$ of $K$ (the union of its interior with its boundary).
\end{restatable}
\begin{prooflater}{proofPolygonDiagonal}
    Consider an arbitrary triangulation of $K$.
    If this triangulation contains a diagonal of $K$ that is not incident to $v$, we are immediately done.
    We thus assume that all $n-3$ diagonals of the triangulation are incident to $v$.

    Consider the 5-gon $L$ induced by vertices $v, x_1, x_2, x_3, x_4$ that are consecutive on the boundary of $K$.
    Since all diagonals of the triangulation are incident to $v$, observe that $L$ is contained in $K$.

    Let $y(u)$ denote the $y$-coordinate of a vertex $u$.
    If $y(x_{i-1}) \leq y(x_{i}) \leq y(x_{i+1})$ or $y(x_{i-1}) \geq y(x_{i}) \geq y(x_{i+1})$ holds for some $i \in \{2,3\}$, the union of $x_{i-1}x_i$ and $x_ix_{i+1}$ is the desired $y$-monotone curve.
    We thus now assume that $x_1, x_2, x_3, x_4$ form a zigzag.
    
    First consider the case where $y(x_i) \leq y(v)$ for all $i \in \{1,2,3,4\}$; the case where $y(x_i) \geq y(v)$ holds is symmetric.
    Since $x_1, x_2, x_3, x_4$ form a zigzag, exactly one of the interior angles at $x_2$ and $x_3$ is convex.
    Without loss of generality, we assume this holds for $x_2$.
    Since $y(x_1) \leq y(v)$, $y(x_2) \leq y(v)$ and $y(x_3) \leq y(v)$, $v$ does not lie inside the triangle $T$ formed by $x_1, x_2$ and $x_3$.
    The vertex $x_4$ also does not lie inside $T$, since otherwise the triangulation edge $vx_3$ would not lie inside $K$, a contradiction.
    Hence $T$ is fully contained in $L$ and thus the straight line segment $x_1x_3$ is the desired $y$-monotone curve.

    Finally, consider the case where some of $x_1, x_2, x_3, x_4$ lie above $v$ and others lie below~$v$.
    No matter how the four vertices are distributed, there always exists a non-adjacent pair $x_i \neq x_j$ with $y(x_i) \leq y(v) \leq y(x_j)$.
    Since the line segments $x_iv$ and $vx_j$ either lie on the boundary of $K$ or are diagonals of the triangulation, the union of $x_iv$ and $vx_j$ yields the desired $y$-monotone curve.
\end{prooflater}

We assume basic familiarity with parameterized complexity (see, e.g.,~\cite{Cygan15} for an overview).

\section{Hardness of \texorpdfstring{$y$}{y}-monotone Clustered Level Planarity}
\label{sec:complexity}

\begin{figure}[p]
\begin{subfigure}{\textwidth}
	\centering
	\includegraphics[page=2]{graphics/pdf/3part_uber2}
    \phantomsubcaption\label{fig:3partition}
	\phantomsubcaption\label{fig:3part-gadgets}
	\phantomsubcaption\label{fig:3part-red1}
	\phantomsubcaption\label{fig:3part-red2}
	\phantomsubcaption\label{fig:3part-red3}
\end{subfigure}
\caption{
    \textbf{(a)} Sketch of a \textsc{3-Partition} instance with 3 buckets, each requiring $B = 4$ red squares to be filled, and a 3-plug providing 3 squares, which cannot be split over multiple buckets due to walls. 
	\textbf{(b)} The \yclp gadgets we use to model walls, receivers, and plugs.
    The colors of the vertices correspond to the clusters and the background colors highlight plugs, buckets, and walls.
  	Dashed edges represent the $y$-monotone augmentation for the respective cluster, where the red augmentation requires further vertices of the red cluster filling the red boxes.
  \textbf{(c)} A bucket of size 7, filled with two 2-plugs and a 3-plug.
  	As there are exactly enough plug pins to fill the spaces between receivers, walls need to be evenly distributed between the buckets to allow for augmentation.
\textbf{(d)} An equivalent instance where high-degree vertices are replaced by clusters ensuring that vertices
  belonging to the same plug/bucket stay together.
\textbf{(e)} An equivalent instance where high-degree vertices are replaced by stair tread and resetter gadgets that ensure the right (number of) components stay together.
}
\label{fig:3part-reds}
\end{figure}

The authors in~\cite{fink_clustered_2024} show that \yclp is NP-complete, even with only two (flat) clusters and spanning only five levels, by a reduction from the \textsc{3-Partition} problem, which asks to partition a set of numbers into buckets, each summing to a given value $B$ (see \Cref{fig:3partition}).

\begin{restatable}\restaterefName{thm:threepart-levelcluster}{{\cite[Theorem 3.2]{fink_clustered_2024}}}{theorem}{threepartthm}
  \label{thm:threepart-levelcluster}
  \yclp is NP-complete, even if the input is proper and acyclic, contains at most two clusters and five levels, and all vertices have a fixed rotation.
\end{restatable}
\begin{prooflater}{threepartproof}
	We only sketch the correctness of the slightly modified construction here as it also follows analogously to the proof given by \cite{fink_clustered_2024}.
	The reduction is from the strongly NP-complete problem \textsc{3-Partition}~\cite{GareyJ79}, which takes as input a number $m \in \mathbb{N}$, a set $A = \{a_1, \dots, a_{3m}\}$ of positive integers, and a number $B \in \mathbb{N^+}$ such that $B/4 < a_i < B/2$ and $\sum_{a \in A} a = m \cdot B$.
	The problem asks whether there exists a partition of $A$ into $m$ sets such that the sum of each set equals $B$.

	Given an instance $(m, A, B)$ of \textsc{3-Partition}, we build $m$ \emph{buckets of size $B$} that are obtained by chaining together $B + 1$ \emph{receivers} at a common vertex $v$; see \Cref{fig:3partition,fig:3part-gadgets}.
	We additionally construct $m-1$ \emph{walls} that span all five levels and whose purpose is to separate the $m$ buckets.
	Each positive integer $a_i \in A$ is represented by an \emph{$a_i$-plug} consisting of $a_i$ paths that join in a single vertex $h$ on the topmost level; see \Cref{fig:3part-gadgets}.
	The bottommost vertex of each of these paths is called a \emph{pin}.

	The instance contains a cluster $\mu$ that consists of the pins of all plugs and a single vertex of each receiver and each wall (colored red in \Cref{fig:3part-gadgets,fig:3part-red1}).
	The latter are chosen in a way that guarantees that each receiver can only have $y$-monotone augmentation edges to a single pin or wall to its left and to a single pin or wall to its right.
	The other cluster (colored black in \Cref{fig:3part-red1}) can be augmented horizontally on the lowest and second-highest levels (shown as dashed lines) and by connecting both augmentation paths to the right of the last wall.
	Since the $\mu$ augmentation prevents a wall from lying between two receivers of the same bucket, each bucket requires at least $B$ pins to allow $y$-monotone connections between its cluster vertices.
	However, since the instance contains~$m \cdot B$ pins overall, it follows that each bucket must contain precisely $B$ pins.
	Moreover, since the outermost receivers of two buckets cannot be augmented directly to each other and all $m \cdot B$ pins must lie inside some bucket, each pair of adjacent buckets is separated by a wall.
	This guarantees that all pins of each plug lie in the same bucket; see \Cref{fig:3part-red1}.

	The resulting \yclp instance is a yes-instance if and only if the plugs can be partitioned into $m$ sets of size $B$, which is the case if and only if $(m, A, B)$ is a yes-instance of \textsc{3-Partition}.
\end{prooflater}

\Cref{fig:3part-gadgets,fig:3part-red1} sketch a slightly modified construction, which allows us to extend the reduction and show that hardness also holds if all components of the graph have constant size:
By replacing high-degree vertices with degree-1 vertices on the same level that belong to a common cluster to ensure their consecutivity (see \Cref{fig:3part-red2}), we obtain an equivalent instance with an unbounded number of clusters, but constant-size connected components.
\begin{restatable}\restateref{thm:threepart-levelvi}{theorem}{thmConstCompLevel}
  \label{thm:threepart-levelvi}
  \yclp is NP-complete, even if the input is proper and acyclic, contains at most five levels, every  component has constant size, and all vertices have a fixed~rotation.
\end{restatable}
\begin{prooflater}{proofConstCompLevel}
  Let $(m, A, B)$ be an instance of \textsc{3-Partition} and let $(G, \ell, T)$ be the equivalent instance of \yclplan obtained from the reduction in \Cref{thm:threepart-levelcluster}.
  In every plug $P$ of $G$, we first replace the vertex $h$ (see \Cref{fig:3part-gadgets}) with $\deg(h)$ degree-1 vertices that are incident to the original neighbors of $h$, thus turning $P$ into $\deg(h)$ paths of length 4.
  To ensure that these paths remain consecutive, the new degree-1 vertices that replace $h$ are placed in a new cluster; see \Cref{fig:3part-red2}.
  As there a no other vertices in the new cluster, the paths belonging to $P$ can then be connected with $y$-monotone (or actually strictly horizontal) curves if and only if no walls or paths of other plugs lie between them.
  Observe that this modification preserves any solution of $(G, \ell, T)$, and contracting the horizontal augmentation edges of the new clusters in a solution yields a solution to the original instance $(G, \ell, T)$.
  Thus, this modification yields an equivalent instance.

  Analogously applying this procedure to the high-degree vertex of every bucket on the lowest level gives an instance where every connected component contains a constant number of vertices.
  To avoid problems with augmenting the old black cluster (see \Cref{fig:3part-red1}), we exploit the already unbounded number of clusters and put every black vertex into its own individual cluster.
  This yields an equivalent instance as we can easily insert a horizontal augmentation when undoing this change and again putting all affected vertices into a single black cluster.
\end{prooflater}

Using a different construction to ensure that the right number of components for each plug and each bucket remain together (see \Cref{fig:3part-red3}), we
further bound the number of clusters
at the cost of introducing an unbounded number of levels and making the instance~non-proper.
\begin{restatable}\restateref{thm:threepart-levelstairs}{theorem}{thmLevelstairs}
  \label{thm:threepart-levelstairs}
  \yclp is NP-complete, even if the input is acyclic, contains at most three clusters, every component has constant size, and all vertices have a fixed rotation.
\end{restatable}
\begin{prooflater}{proofLevelstairs}
  Let $(m, A, B)$ be an instance of \textsc{3-Partition} and let $(G, \ell, T)$ be the equivalent instance of \yclplan obtained from the reduction in \Cref{thm:threepart-levelcluster}.
  We construct an equivalent instance $(G', \ell', T')$ with only three clusters and where every connected component has constant size as follows; see \Cref{fig:3part-red3}.
  The interaction between pins, receivers, and walls remains the same as in \Cref{thm:threepart-levelcluster,thm:threepart-levelvi}, but since the reduction in \Cref{thm:threepart-levelvi} requires many clusters, we need a different mechanism that ensures that pins of plugs and receivers of buckets remain consecutive.
  In the following, we first encode the restriction that pins of the same plug must be consecutive with respect to all walls and the pins of other plugs.
  Let $M < B/2$ be the maximum integer in $A$.
  To make the following description less verbose, we use the notion of \emph{layers} next to levels.
  For each $i \in \{0, \dots, M\}$, layer $i$ refers to level $l(i) = 5 + 3i$, which ensures that gadgets we place on different layers lie above previously-used levels and have enough space between them to achieve the desired interaction.

  Let $P$ be a plug of $G$ and let $s_1, \dots, s_a$ denote the individual pins of $P$.
  Similar to the reduction in \Cref{thm:threepart-levelvi}, we first remove the high-degree vertex $h$ of $P$ (see \Cref{fig:3part-gadgets}) and replace it with $\deg(h)$ degree-1 vertices on the topmost level, each of which is adjacent to one of the original neighbors of $h$.
  Additionally, we attach a topmost-level vertex to the top of each wall.
  Therefore, every pin and every wall is now attached to a path that extends to the topmost level.
  Roughly speaking, these paths will additionally contain vertices of a common cluster and structures on different layers that use the $y$-monotonicity of the cluster connections to enforce a stair-like arrangement; see \Cref{fig:3part-red3}.
  This fixes the order of the pins of $P$ and prevents walls or pins of other plugs from being interspersed.
  We call these new cluster-vertices the \emph{blue vertices}, the other vertices are either \emph{red} or \emph{black} (see \Cref{fig:3part-red3}).

  More specifically, for each $i \in \{1, \dots, a - 1\}$ in an $a$-plug $P$, the path containing $s_i$ is attached to a \emph{stair tread} on layer $i$.
  It has the structure shown in \Cref{fig:3part-red3} on the right and contains a blue vertex on layer $i$ that can only connect to vertices on higher layers on one side, and only to vertices on lower layers on the other side.
  We call the two sides the \emph{lower side} and \emph{higher side}, respectively.
  
  The pin $s_a$ is designated as the \emph{resetter}; also shown in \Cref{fig:3part-red3} on the right.
  It has a fixed embedding up to horizontal mirroring and contains a blue vertex on layer $a$ that can only connect to vertices on lower layers on one side (the \emph{lower side}).
  On the other side, it contains a blue vertex on layer $0$.
  Additionally, each wall contains a blue subdivision vertex on layer $0$, see \Cref{fig:3part-red3}.

  We repeat the same construction (mirrored vertically, extending below the previously lowest level) for every bucket by replacing the unique high-degree vertex $v$ of every bucket analogously; see \Cref{fig:3part-gadgets,fig:3part-red3}.
  Note that $(G', \ell', T')$ now consists of three clusters, since it additionally contains a single cluster consisting of the blue vertices compared to $(G, \ell, T)$.
  Moreover, every connected component now has constant size.
  It remains to show that $(G, \ell, T)$ and $(G', \ell', T')$ are equivalent.

  First assume that there exists a \yclp solution $(\mathcal E, \mathcal A)$ of $(G, \ell, T)$.
  We obtain a drawing $(\mathcal E', \mathcal A')$ of $(G', \ell', T')$ by replacing the high-degree vertices of plugs and buckets with paths as described above.
  Since the $a$ pins of a plug $P$ are indistinguishable in $G$, we can reorder them such that they appear in the sequence $s_1, \dots, s_a$ from left to right in $G'$. 
  For $i \in \{1, \dots a-1\}$, the pin $s_i$ is attached to a stair tread on layer $i$ and its higher side can thus connect directly to the lower side of its adjacent stair tread (or resetter) $s_{i + 1}$ using a $y$-monotone curve.
  The right side of the resetter $s_a$ can connect to the lower side of the stair tread $s'_1$ of the plug $P'$ to its right.
  If there is a wall to its right, then $s_a$ can instead connect to the blue vertex on layer $0$ of the wall, which in turn connects to the stair tread $s'_1$ of the following plug.
  The $y$-monotone connections of blue vertices that ensure the consecutivity of pins thus now form a path that we call the \emph{upper path}.
  Applying the same procedure for the receivers, we obtain an additional \emph{lower path} connecting the blue vertices that ensure the consecutivity for the receivers.
  Recall from the proof of \Cref{thm:threepart-levelcluster} that every wall lies between two plugs.
  The left- and rightmost blue vertices of the upper and lower paths must therefore belong to pins.
  As these extremal vertices can always connect beyond their respective layer 0, the two leftmost vertices can connect via a $y$-monotone curve and the cluster consisting of all blue vertices is thus now connected, but still separates the black cluster.
  To make the black cluster connected again, we add horizontal augmentation edges to the top- and bottommost levels, which we connect with the previous augmentation for the black cluster at the very right (compare \Cref{fig:3part-red1,fig:3part-red3}).
  Note that the augmentation for the red cluster remains unaffected and thus all three clusters can be made connected.

  Now assume that there exists monotone-cl-planar drawing  $(\mathcal E', \mathcal A')$ of $(G', \ell', T')$.
  Recall that the structure of the walls and the red vertices ensure that no two walls lie next to each other.
  To show that there exists a monotone-cl-planar drawing  $(\mathcal E, \mathcal A)$ of $(G, \ell, T)$, it thus suffices to show that the pins of plugs and the receivers of buckets are consecutive in $(\mathcal E', \mathcal A')$.
  Again, we only show that this holds for the pins, the restriction for the receivers can be shown analogously.
  
  Note that, for each $i \in \{1, \dots, M\}$, there are exactly $R_i \coloneqq |\{a \in A \mid a = i\}|$ resetters and $S_i \coloneqq |\{a \in A \mid a > i\}|$ stair treads on layer $i$ overall.
  Observe that $R_i + S_i = S_{i-1}$ for $i \in \{2, \dots, M\}$.
  Since all pins and walls extend from layer 0 to layer $M$, note that each stair tread and resetter can only be connected to a single stair tread or resetter on each side in  $(\mathcal E', \mathcal A')$.
  We show via induction that the higher side of every stair tread $T$ placed on layer $i$ for $i \in \{1, \dots, M - 1\}$ must connect to the lower side of a stair tread or a resetter on layer $i + 1$ in $(\mathcal E', \mathcal A')$.

  As a base case, consider the $S_{M-1}$ stair treads on layer $M - 1$.
  Recall that, on its higher side, a stair tread can only connect to blue vertices that lie above.
  Since only layer $M$ lies above layer $M-1$ and since $S_M = 0$ and $R_M = S_{M-1}$, the higher side of each of the $S_{M-1}$ stairs on layer $M-1$ must connect to the lower side of the $R_M$ resetters on layer $M$.
  
  Now assume that the statement holds for $\{i + 1, \dots, M - 1\}$ and consider layer $i$.
  Recall the higher side of one stair tread can only connect to levels strictly above, and conversely must connect to the lower side of the above stair tread or resetter.
  Assume for the sake of contradiction that the higher side of a stair tread $T$ on layer $i$ connects to the lower side of a stair tread or resetter on layer $j \geq i + 2$.
  By the inductive hypothesis, each of the $S_{j - 1}$ stair treads on layer $j - 1$ connects to the lower side of a stair tread or resetter on layer $j$.
  Since $S_{j - 1} = S_j + R_j$ and each side has at most one augmentation edge, the lower side of every stair tread or resetter on layer $j$ is thus already ``occupied'' by a stair tread on layer $j - 1$.
  Therefore, no stair tread or resetter on layer $j$ can connect to $T$, a contradiction.
  The higher side of $T$ must therefore be connected to a stair tread or resetter on layer $i + 1$, which concludes the inductive proof.
  
  Note that this implies that for every $a \in A$, there exists a consecutive sequence of stair treads on layers $1, \dots, a - 1$, followed by a resetter on layer $a$.
  We can thus contract all paths of these pins into a single vertex to obtain the corresponding plug of $G$.
  Applying the same procedure for the buckets, we find a $y$-monotone cl-planar drawing $(\mathcal E, \mathcal A)$ of $(G, \ell, T)$.
\end{prooflater}

\section{\yclp is FPT parameterized by \texorpdfstring{$\vc(G)+|T|$}{vc(G)+|T|}}
\label{sec:fpt}
In this section, we show that \yclp is FPT parameterized by the size of a vertex cover plus the number of clusters.
Our approach rests on the well-known fact that the number of vertices of degree at least~3 in a planar graph is bounded linearly in its vertex cover number~\cite{fomin_kernelization_2019}.
However, as \yclp is non-hereditary, employing the standard kernelization approach is difficult.
We circumvent this as follows.
Based on the vertex cover and clusters, we obtain a ``core'' graph of bounded size, at first disregarding the augmentability to monotone-cl-planar.
We then show that an instance of \yclp is a yes-instance if and only if it has a bounded-size subgraph that admits a (special) solution and contains this core graph.
In other words, for yes-instances, there exists a bounded-size set of vertices outside of the core that enables augmentability to monotone-cl-planar.
As we cannot enumerate all such sets in FPT time, we instead consider so-called blueprints where these vertices are present as dummy-vertices with only very abstract leveling information.
We can then test whether a given blueprint can be realized by associating these dummy-vertices with vertices of the input graph using dynamic programming.
Since the number of such blueprints is bounded, we can enumerate all of them in FPT time and report a yes-instance if and only if one of them is realizable.

\begin{figure}
  \begin{subfigure}{.32\textwidth}
    \centering
    \includegraphics[page=1]{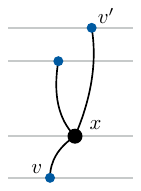}
    \subcaption{}
    \label{fig:equivalenceA}
  \end{subfigure}
  \begin{subfigure}{.32\textwidth}
    \centering
    \includegraphics[page=2]{graphics/pdf/equivalence.pdf}
    \subcaption{}
    \label{fig:equivalenceB}
  \end{subfigure}
  \begin{subfigure}{.32\textwidth}
    \centering
    \includegraphics[page=3]{graphics/pdf/equivalence.pdf}
    \subcaption{}
    \label{fig:equivalenceC}
  \end{subfigure}
  \caption{\textbf{(a)} Three $x$-pendant vertices that belong to the blue cluster, i.e., the three vertices are equivalent. The upper representative of $v$ is $v'$ and $v$ is its own lower representative. 
  \textbf{(b)} Equivalent degree-2 vertices, where $a$ is an upper $xy$-ear, $b$ is an $xy$-transversal, and $c$ is a lower~$xy$-ear.
  \textbf{(c)} Two $xy$-ears $a, b$ that lie on the same level $l$. Note that a third $xy$-ear on level $l$ immediately implies a no-instance, as it cannot connect to $y$ and $x$ with crossing-free $y$-monotone edges.
  }
  \label{fig:equivalence}
\end{figure}

\subparagraph*{Notation.}
Let $\mathcal I=(G, \ell, T)$ be an instance of \yclp and let $Q$ be a vertex cover of $G$.
Two vertices $u,v\in V(G)\setminus Q$ are \emph{equivalent} if $N(u) = N(v)$ and $T(u) = T(v)$, that is, they have the same neighborhood and cluster.
To break ties, we use an arbitrary linear ordering~$<$ of~$V(G)$ such that $u < v$ implies~$\ell(u) \le \ell(v)$.
For $v \in V(G) \setminus Q$, we define the \emph{lower representative} and \emph{upper representative} of $v$ as the (with respect to~$<$) smallest and largest vertex equivalent to~$v$, respectively.
We refer to a \mbox{degree-1} vertex $v \in V(G)\setminus Q$ with neighbor $x\in Q$ as \emph{($x$-)pendant} and to a degree-2 vertex $v\in V(G) \setminus Q$ with neighbors $x<y\in Q$ as \emph{\mbox{($xy$-)transversal}} if $\ell(x)<\ell(v)<\ell(y)$ and as \emph{($xy$-)ear} otherwise. 
We further call $v$ a \emph{lower $xy$-ear} if~$\ell(v) < \min\{\ell(x),\ell(y)\}$ and an \emph{upper $xy$-ear} if~$\ell(v) > \max \{\ell(x),\ell(y)\}$;~see~\Cref{fig:equivalence}.

Our algorithm rests on the following central observation; see also \Cref{fig:equivalenceC}.

\begin{observation}\label{obs:no-three-equivalent-same-ell}
    Any instance of \yclp with three or more ears on the same level and with the same neighborhood is a no-instance.
\end{observation}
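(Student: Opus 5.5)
Suppose, for contradiction, that $a_1,a_2,a_3$ are three $xy$-ears on the same level $l$, all with neighbourhood $\{x,y\}$, and that $(\mathcal E,\mathcal A)$ is a monotone-cl-planar drawing. By symmetry (reflecting the levels) we may assume all three are upper ears, so $\ell(x),\ell(y)<l$. Number them $a_1,a_2,a_3$ from left to right on level $l$. Each $a_i$ is connected to $x$ and to $y$ by $y$-monotone edges of $G$, and each such edge goes strictly downward from level $l$. The plan is to show that the middle ear $a_2$ is enclosed by a closed curve formed by the paths through $a_1$ and $a_3$. This will contradict level-planarity itself, so no augmentation edges are needed.

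\textbf{Step 1: the enclosing cycle.} Consider the cycle $C = x\,a_1\,y\,a_3\,x$. Because $a_1$ and $a_3$ have the same neighbourhood, this cycle lies in $G$. It is drawn as a simple closed curve. Its topmost points are $a_1$ and $a_3$ on level $l$, and every edge of $C$ descends from level $l$ towards $x$ or $y$.

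\textbf{Step 2: $a_2$ lies inside $C$.} The horizontal line at level $l$ meets $C$ only in $a_1$ and $a_3$, since all edges of $C$ lie strictly below level $l$ except at their upper endpoints. Just below level $l$, the segment strictly between $a_1$ and $a_3$ lies inside the region bounded by $C$. Here we use that the edge $a_1y$ (or $a_1x$) leaves $a_1$ on one side and the edge at $a_3$ leaves $a_3$ on the other side, so the region between them, just below level $l$, is interior. Hence $a_2$ lies on the boundary of the interior at level $l$, and any descending curve from $a_2$ immediately enters the interior of $C$.

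\textbf{Step 3: the contradiction.} The edges $a_2x$ and $a_2y$ both descend from $a_2$ into the interior of $C$. They must reach $x$ and $y$ without crossing $C$, which is possible since $x$ and $y$ lie on $C$. So we now have a planar drawing of $K_{2,3}$ on $\{x,y\}\times\{a_1,a_2,a_3\}$ in which $a_2$ lies inside the cycle through the other two ears. This is combinatorially fine, so the contradiction must come from monotonicity instead.

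The main obstacle is exactly this step, and I would handle it as follows. Inside $C$, the two paths $x\,a_2\,y$ and $x\,a_1\,y$ (resp.\ $x\,a_3\,y$) split the interior into faces. Take the face bounded by $x\,a_1\,y\,a_2\,x$; it is a $y$-monotone-edged 4-cycle whose top vertices $a_1,a_2$ are both on level $l$. Sweep the horizontal line slightly below $l$ across this face and across the neighbouring face $x\,a_2\,y\,a_3\,x$. Each face must meet that line in an interval. Counting crossings of the line with the edges then gives the ordering $a_1x$ or $a_1y$, then both edges of $a_2$, then the edges of $a_3$. Since every $a_i$ has one edge to $x$ and one to $y$, the left-to-right order of the edges on every lower level must be consistent with a single order of $x$ and $y$. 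In a level-planar drawing, the six edges to $x$ and $y$ from three consecutive top vertices must nest: the edges to the left target must all precede those to the right target, so $a_1x, a_2x, a_3x$ are left of $a_1y, a_2y, a_3y$. But then $a_2y$ lies between $a_3x$ and $a_1y$ on every intermediate level. These two edges bound a region that is closed at the top by the level segment between $a_1$ and $a_3$. This forces $a_2x$ and $a_2y$ to cross $a_1y$ or $a_3x$ before reaching their targets, contradicting level-planarity.

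Finally, I would double-check the mixed case, where one ear is lower and one is upper. This cannot occur on the same level unless the levels of $x$ and $y$ give a consistent side, and there they are all the same type. Hence $(\mathcal E,\mathcal A)$ cannot exist, and the instance is a no-instance.
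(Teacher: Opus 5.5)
Your overall strategy, deriving the contradiction from level-planarity alone without touching the augmentation, is the right one. It is also the paper's reasoning: a third $xy$-ear on level $l$ cannot be joined to both $x$ and $y$ by crossing-free $y$-monotone edges (cf.\ \Cref{fig:equivalenceC}). However, Step~2 is false, and Step~3 does not recover from it.

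A horizontal line slightly below level $l$ meets $C = x a_1 y a_3 x$ in exactly four points, two near $a_1$ and two near $a_3$, and each is a genuine crossing. By parity, the stretch between these two pairs lies \emph{outside} $C$. More directly, $a_2$ can be joined to points above level $l$ without meeting $C$. So $a_2$ is not enclosed by $C$; it sits in the notch between $a_1$ and $a_3$.

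In Step~3, your decisive claim is that all edges to one target precede all edges to the other. This is only valid when $\ell(x) = \ell(y)$. If $\ell(x) < \ell(y)$, the $x$-edges may flank the $y$-edges on both sides, as in the admissible two-ear order $a_1x, a_1y, a_3y, a_3x$. Moreover, if the claim held, it would already contradict the ear-grouped order you had just derived. The next sentence ($a_2y$ between $a_3x$ and $a_1y$) is inconsistent with numbering the ears from left to right, and the final ``forced crossing'' does not follow from anything you established.

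The missing idea is that the edges at the higher common neighbour must be consecutive.
\begin{itemize}
\item Just below level $l$, the six edges appear grouped by ear, in positions $\{1,2\}$, $\{3,4\}$, $\{5,6\}$ for $a_1,a_2,a_3$.
\item They are pairwise non-crossing $y$-monotone curves, so this order is the same on every level strictly between $\max\{\ell(x),\ell(y)\}$ and $l$.
\item Say $\ell(y) \ge \ell(x)$. As the level approaches $\ell(y)$ from above, the three edges ending at $y$ converge to the point $y$. Any edge between two of them in the order would be squeezed through $y$, or, if $\ell(x) = \ell(y)$, would force $x$ and $y$ to coincide.
\item Hence the three $y$-edges occupy three consecutive positions.
\item But every window of three consecutive positions among $1,\dots,6$ contains both edges of some ear, one of which ends at $x$. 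This is a contradiction.
\end{itemize}

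This is the correct version of your enclosure picture. For $\ell(x) < \ell(y)$, the edges $a_1y$ and $a_3y$, together with the level-$l$ segment between $a_1$ and $a_3$, bound a pocket. This pocket has $a_2$ on its top side and meets level $\ell(y)$ only in $y$, so $a_2x$ cannot leave it.
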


\subparagraph*{The Core.}
We define $C_0$ as the set of all vertices in the vertex cover $Q$, all vertices of degree at least~3, and the lower and upper representative of all vertices of degree at most~1.
Since the number of vertices of degree at least~3 is bounded linearly in~$|Q|$~\cite[Lemma~13.3]{fomin_kernelization_2019} and there are $O(|Q| \cdot |T|)$ representatives for the vertices of degree at most 1, $|C_0|$ is in $O(|Q| \cdot |T|)$.
The \emph{core vertex set}~$C$ consists of $C_0$ and, for every $u, v \in Q$ and every cluster~$\mu$ of $(G, T)$, $|C_0| + 2$ $uv$-transversal vertices in cluster $\mu$  (or all of them, if there are fewer).
We define the \emph{core graph} $G[C]$ of $\mathcal I$ as the graph induced by $C$.
Note that the size of $C$ is in $O(|Q|^3 \cdot |T|^2)$.

For an $\yclp$-solution $(\mathcal E, \mathcal A)$ and a vertex $v \in V(G) \setminus C$, we refer to the degree of $v$ in~$\mathcal A$ (that is, the number of its incident augmentation edges) as the \emph{aug-degree} of $v$.
We say that edges of $\mathcal E$ (i.e., of $G$) are \emph{real edges}, while edges of $\mathcal A$ are \emph{augmentation edges}.

\subsection{Subinstances and Tidy Solutions}

Let $\mathcal I=(G, \ell, T)$ be an instance of \yclp with core vertex set $C$.
A set $X \subseteq V(G)$ is a \emph{subinstance of $\mathcal I$} if $C \subseteq X$. For a subinstance~$X$, we define~$\mathcal I[X] = (G[X],\ell\vert_{X}, T\vert_{X})$.  Let $X \subseteq V(G)$ be a subinstance and let $(\mathcal E, \mathcal A)$ be a solution of $\mathcal I[X]$.
Let~$v \in V(G) \setminus X$ be an ear with neighbors~$x,y \in C$.   
In order to be able to reinsert $v$ into $(\mathcal E, \mathcal A)$, we require specific augmentation edges in $(\mathcal E, \mathcal A)$.
We say that the ear~$v$ is \emph{covered by an augmentation edge $e = ab\in \mathcal A$} if 
\begin{itemize}
    \item $v$, $a$, and $b$ are all upper $xy$-ears or all lower $xy$-ears with $\ell(a) < \ell(v) < \ell(b)$,
    \item the two faces incident to $e$ are triangles, and
    \item $v$, $a$, and $b$ are equivalent or $\{T(v), T(a), T(b)\} \subseteq \{T(x), T(y)\}$.
\end{itemize}
We will show that the first two properties allow us to reinsert $v$ along $e$ into $(\mathcal E, \mathcal A)$ without introducing crossings, while the last property then helps us maintain cluster connectivity.
A solution $(\mathcal E, \mathcal A)$ of $\mathcal I[X]$ is \emph{tidy} if 
every ear in $V(G) \setminus X$ is covered by an augmentation edge and if every pair of ears that lie on the same level and have the same neighborhood is covered by two distinct augmentation edges.  
Note that, for the sake of these covering edges, we explicitly allow augmentation edges between vertices that belong to different clusters.

\begin{figure}
  \begin{subfigure}{.5\textwidth}
    \centering
    \includegraphics[page=2]{graphics/pdf/reinsert}
    \subcaption{}
    \label{fig:reinsertA}
  \end{subfigure}
  \begin{subfigure}{.5\textwidth}
    \centering
    \includegraphics[page=1]{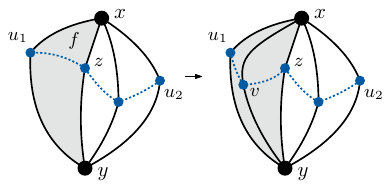}
    \subcaption{}
    \label{fig:reinsertB}
  \end{subfigure}
    \begin{subfigure}{.5\textwidth}
    \centering
    \includegraphics[page=7]{graphics/pdf/reinsert}
    \subcaption{}
    \label{fig:reinsertCA}
  \end{subfigure}
  \begin{subfigure}{.5\textwidth}
    \centering
    \includegraphics[page=3]{graphics/pdf/reinsert}
    \subcaption{}
    \label{fig:reinsertC}
  \end{subfigure}
   \begin{subfigure}{\textwidth}
    \centering
    \includegraphics[page=4]{graphics/pdf/reinsert}
    \subcaption{}
    \label{fig:reinsertD}
  \end{subfigure}
  \begin{subfigure}{.49\textwidth}
    \centering
    \includegraphics[page=5]{graphics/pdf/reinsert}
    \subcaption{}
    \label{fig:reinsertE}
  \end{subfigure}
  \begin{subfigure}{.49\textwidth}
    \centering
    \includegraphics[page=6]{graphics/pdf/reinsert}
    \subcaption{}
    \label{fig:reinsertF}
  \end{subfigure}
  \caption{ Reinserting a vertex $v$ if $v$ is an $xy$-transversal \textbf{(a)}-\textbf{(b)}, if $v$ is an $xy$-ear \textbf{(c)}-\textbf{(d)}, if $v$ is a degree-1 vertex \textbf{(e)}, and if $v$ is an isolated vertex, whose upper and lower representative must be connected by some path $P$ \textbf{(f)}-\textbf{(g)}. Vertices of different colors belong to different clusters and large vertices belong to the core. Dashed edges represent augmentation edges.
  }
  \label{fig:reinsert}
\end{figure}
\begin{restatable}\restateref{lm:tidySubinstance}{lemma}{lmTidySubinstance}
\label{lm:tidySubinstance}
  Let $\mathcal I=(G, \ell, T)$ be an instance of \yclp.
  Then $\mathcal I$ is a yes-instance if and only if it has a subinstance $X$ such that $\mathcal I[X]$ admits a tidy solution.
\end{restatable}
\begin{proofsketch}
    If $\mathcal I$ is a yes-instance, then $V(G)$ is a subinstance of $\mathcal I$ that admits a solution, which is tidy as there are no ears not in $V(G)$.
    For the other direction, we can always use vertices of the core in a tidy solution $(\mathcal E, \mathcal A)$ along which we can reinsert a given vertex $v \in V(G) \setminus X$; see \Cref{fig:reinsert}.
    If $v$ is an $xy$-transversal, we use the fact that we kept $|C_0| + 2$ $xy$-transversals equivalent to $v$ in the core to find a cycle $f$ in whose interior we reinsert~$v$ (Figures~\ref{fig:reinsertA}--\ref{fig:reinsertB}).
For an $xy$-ear $v$, the tidiness of the solution guarantees the existence of an edge $ab$ that covers $v$ and we can reinsert $v$ as a subdivision vertex on~$ab$ while maintaining cluster connectivity (Figures~\ref{fig:reinsertCA}--\ref{fig:reinsertC}).
For pendant and isolated vertices, we reinsert $v$ guided by the upper representative~$v'$ and/or the lower representative~$v''$ of~$v$ (\Cref{fig:reinsertD,fig:reinsertE,fig:reinsertF}).
\end{proofsketch}
\begin{prooflater}{proofTidySubinstance}
  If $\mathcal I$ is a yes-instance, then $V(G)$ is a subinstance of $\mathcal I$ that admits a solution, which is tidy as there are no ears not in $V(G)$.
  For the other direction, consider a subinstance $X$ of $\mathcal I$ such that $\mathcal I[X]$ admits a tidy solution $(\mathcal E, \mathcal A)$.
  We show how to iteratively reinsert all vertices of $V(G)\setminus X$, which all have degree at most 2. To this end, we first reinsert all vertices of degree 2 while preserving tidiness of the solution.

  Let $v \in V(G) \setminus X$ with $\deg(v)=2$, let $\mu = T(v)$, and let $x, y \in C$ denote the two neighbors of $v$, where without loss of generality~$\ell(x) \le \ell(y)$.
  We distinguish between two cases, based on the position of $\ell(v)$ with respect to $\ell(x)$ and $\ell(y)$.
  
  If $\ell(x) \le \ell(v) \le \ell(y)$, since there are no edges between vertices on the same level, we have~$\ell(x) < \ell(v) < \ell(y)$, i.e., we are in the transversal case.
  Since~$v \notin C$, we retained at least $|C_0| + 2$ other $xy$-transversal vertices $U\subset C$ equivalent to~$v$.
  Consequently,~$\mathcal E$ contains a cycle $K$ formed by $x,y$ and two vertices $u_1,u_2\in U$ such that the interior of~$K$ contains no vertices of $C_0$.  Let~$f$ denote the face that lies in the interior of~$K$ and is incident to~$u_1$.   Since we have not yet reinserted pendant and isolated vertices of $V(G) \setminus X$, all vertices contained in the interior of $K$ have degree~$2$ and are adjacent to~$x$ and~$y$. It follows that~$f = xu_1yz$ for some~$z$ has size~$4$ in~$\mathcal E$. Note that $z = u_2$ is possible. If $f$ does not contain the augmentation edge~$u_1z$, we can insert~$v$ parallel to~$xu_1y$ along with the augmentation edge~$u_1v$; see \Cref{fig:reinsertA}.  Otherwise, we remove the augmentation edge~$u_1z$, insert $v$ parallel to~$xu_1y$, and insert the augmentation edges $u_1v,vz$; see \Cref{fig:reinsertB}.
  Since no ears are involved, the solution remains tidy.
  
  Otherwise, $v$ is an $xy$-ear, and we assume without loss of generality that it is an upper~$xy$-ear, i.e.,~$\ell(x) \le \ell(y) < \ell(v)$.  The case of a lower~$xy$-ear, where~$\ell(v) < \ell(x) \le \ell(y)$, is symmetric.
  As the solution $(\mathcal E, \mathcal A)$ for $\mathcal I[X]$ is tidy, there is an augmentation edge $e = ab$ in~$\mathcal A$ that covers~$v$.  We then insert $v$ as a subdivision vertex on $e$ and route its edges to $x,y$ first along $e$ and then following the edges of $a$ to $x$ and $y$, respectively, to obtain a solution for $\mathcal I' = \mathcal I[X \cup \{v\}]$; see \Cref{fig:reinsertCA,fig:reinsertC}. Note that this yields a crossing-free $y$-monotone drawing, since the fact that~$e$ covers~$v$ implies that $\ell(x) \le \ell(y) < \ell(a) < \ell(v) < \ell(b)$ and the two faces incident to $e$ are triangles (and therefore empty).
  Moreover, if~$a,b,v$ are all equivalent, the subdivision of the augmentation edge guarantees that their cluster remains connected; see \Cref{fig:reinsertCA}.  Otherwise, we have~$\{T(a),T(b),T(v)\} \subseteq \{T(x),T(y)\}$ and their clusters trivially remain connected with $x$ and $y$; see \Cref{fig:reinsertC}.  Hence, we have obtained a solution~$(\mathcal E',\mathcal A')$ of~$\mathcal I'$.
  
  Observe that in this solution, we split the edge $e=ab$ into two edges $e'=av$ and $e''=bv$, thus we need to ensure that all ears not in $X$ are still covered for $\mathcal I'$ to be tidy.  First, note that all ears that are covered by an edge distinct from~$e$ are still covered by that same edge.
Any ear $v'$ with $\ell(v')<\ell(v)$ that was covered by $e$ is now covered by $e'$, while those with $\ell(v')>\ell(v)$ are now covered by $e''$.
  This leaves $xy$-ears $v' \neq v$ with $\ell(v)=\ell(v')$.
  By \Cref{obs:no-three-equivalent-same-ell} we can assume that there are no two such vertices in $\mathcal I'$, as otherwise $\mathcal I$ would contain three $xy$-ears on level $\ell(v)$, implying that it is a no-instance.
  As the definition of tidiness requires two distinct covering augmentation edges $e,f$ in case of two $xy$-ears on the same level, this leaves $v'$ covered by $f$, and we thus always obtain a tidy solution.

  By repeating these steps, we eventually reinsert all degree-2 vertices.
  Since this includes all ears, observe that every solution of the instance is now trivially tidy.
  We now insert the degree-1 vertices.
  To this end, let $v \in V(G) \setminus X$ with $\deg(v) = 1$ and let $x \in C$ denote the unique neighbor of $v$.
  Without loss of generality, we assume $\ell(x) < \ell(v)$.
  Let $v' \in C$ denote the upper representative of $v$, i.e., $\ell(v) \leq \ell(v')$.
  First assume $\ell(v) < \ell(v')$.
  Observe that we can add a $y$-monotone copy of the edge $xv'$ to $(\mathcal E, \mathcal A)$ without introducing crossings by routing it directly next to the original.
  Since $\ell(x) < \ell(v) < \ell(v')$, we can add $v$ as a subdivision vertex on level $\ell(v)$ and turn the edge $vv'$ into an augmentation edge for cluster $\mu = T(v) = T(v')$; see \Cref{fig:reinsertD} (left).
  Observe that $xv$ and $vv'$ are $y$-monotone and that the cluster $T(v) = T(v')$ is connected in the resulting drawing, i.e., we have a solution.
  If $\ell(v) = \ell(v')$, we place $v$ directly to the right of $v'$, route the edge $xv$ directly to the right of $xv'$, and add the horizontal augmentation edge $vv'$ for cluster $\mu$; see \Cref{fig:reinsertD} (right).
  Since $\ell(v) = \ell(v')$, there may be augmentation edges incident to $v'$ that cross the new edge $xv$, no matter how close we place $v$ to $v'$, namely those that leave~$v'$ downwards to the right or horizontally to the right.
  In this case, we simply reattach these augmentation edges from $v'$ to $v$; see \Cref{fig:reinsertD} (right).
  Since $v$ and $v'$ are connected by an augmentation edge, $\mu$ remains connected.

  Finally, it remains to insert isolated vertices.
  Let $v \in V(G) \setminus X$ with $\deg(v) = 0$ and let $v', v'' \in C$ denote the upper and lower representative of $v$, respectively and thus $T(v) = T(v') = T(v'') =: \mu$.
  Since $\mu$ must be connected in the solution, there exists a path $P$ between $v'$ and $v''$ in $(\mathcal E \cup \mathcal A)[T^{-1}(\mu)]$.
  Because $\ell(v') \geq \ell(v) \geq \ell(v'')$, there exists a point in which $P$ intersects level $\ell(v)$.
  Note that such a point is either a vertex or the interior point of an (augmentation) edge.
  
  If there exists a vertex $p$ of $P$ on level $\ell(v)$, we insert $v$ directly to the left of $p$, reattach any horizontal augmentation edge that enters $p$ from the left (if it exists) to $v$ and add the horizontal augmentation edge $vp$; see \Cref{fig:reinsertE}.

  Otherwise, if no vertex of $P$ lies on level $\ell(v)$, there is a point $p$ on level $\ell(v)$ that is an interior point of an (augmentation) edge $e$ such that no endpoint of $e$ lies on $\ell(v)$.
  Due to $y$-monotonicity, we may assume without loss of generality that $p$ is the only interior point of~$e$ on level $\ell(v)$.
  If $e$ is an augmentation edge, we can simply place $v$ as a subdivision vertex of $e$ on $p$.
  If $e$ is a real edge, let $x$ denote the endpoint of $e$ that lies on a higher level than~$\ell(v)$.
  We place $v$ directly to the left of $p$ and route a new augmentation edge $vx$ directly to the left of the segment of $e$ between $p$ and $x$; see \Cref{fig:reinsertF}.
  Hence $\mu$ remains connected and since $e$ is $y$-monotone, $vx$ is also $y$-monotone.
\end{prooflater}

Our goal is to show that every yes-instance has a bounded-size subinstance that admits a tidy solution.  To this end, we show that, similarly to reduction rules for kernelization, we can remove certain non-core vertices while preserving the property that the resulting instance admits a tidy solution.  We refer to this process as \emph{solution normalization}.  Note that, unlike reduction rules for kernelization, our normalization rules exploit a tidy solution, which we for now assume to be known.
We will first bound the number of degree-2 vertices. 
Recall that every degree-2 vertex that is not contained in the core is either a transversal or an ear.
We start by reducing the former.

\begin{figure}
  \begin{subfigure}{.5\textwidth}
    \centering
    \includegraphics[page=2, scale=0.9]{graphics/pdf/Rule2.pdf}
    \subcaption{}
    \label{fig:transversalA}
  \end{subfigure}
  \begin{subfigure}{.5\textwidth}
    \centering
    \includegraphics[page=1, scale=0.9]{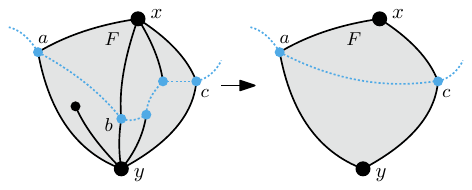}
    \subcaption{}
    \label{fig:transversalB}
  \end{subfigure}
  \caption{Illustration of the two cases of \Cref{rr:transversal}.}
  \label{fig:transversal}
\end{figure}

\begin{lemma}[Solution Normalization 1]
    \label{rr:transversal}
    Let $X$ be a subinstance of $\mathcal I$ that has a tidy solution~$(\mathcal E, \mathcal A)$.
    If $X \setminus C$ contains at least $2|C|+3$ equivalent $xy$-transversals with $\{x,y\} \subseteq C$, there exists a subinstance $X' \subsetneq X$ of $\mathcal I$ that has a tidy solution.
\end{lemma}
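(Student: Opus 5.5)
We plan to apply \Cref{prop:sequences} to the left-to-right order of the equivalent $xy$-transversals, remove one transversal $b$ that is ``free'', and patch the augmentation locally. Let $U \subseteq X \setminus C$ be a set of at least $2|C|+3$ equivalent $xy$-transversals, all in a common cluster $\mu$. In $\mathcal E$, every $u \in U$ forms a $y$-monotone path $xuy$. Since these paths pairwise share only $x$ and $y$, they are linearly ordered in the drawing. Concretely, we fix a level strictly between $\ell(x)$ and $\ell(y)$ that all of them cross; it suffices to cut the paths by a horizontal line near $x$. This gives a sequence $\phi = (u_1,\dots,u_n)$ in which any two consecutive paths bound a region $R_i$, the ``lens'' between $xu_iy$ and $xu_{i+1}y$.

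We call the consecutive pair $(u_i,u_{i+1})$ \emph{blocked} if $R_i$ contains a vertex of $C$ in its interior. Lenses of consecutive pairs have disjoint interiors, so each vertex of $C$ blocks at most one pair, and the set of blocked pairs has size at most $|C|$. Since $n \ge 2|C|+3$, \Cref{prop:sequences} yields three consecutive paths $a=u_{i-1}$, $b=u_i$, $c=u_{i+1}$ whose two lenses contain no core vertex.

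We then set $X' = X \setminus \{b\}$. Note that $X'$ still contains $C$, because $b \notin C$. It remains to show that $\mathcal I[X']$ has a tidy solution, and we build one by modifying $(\mathcal E,\mathcal A)$ inside the union $R$ of the two lenses, which is bounded by $xay$ and $xcy$.

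The modification has three steps. First, we delete every vertex of $X$ in the interior of $R$, together with $b$. No core vertex lies there, so by \Cref{lm:tidySubinstance} this only shrinks the subinstance. Any degree-$\le 1$ or ear vertex removed this way is not in $C$, so it becomes a vertex of $V(G)\setminus X'$ that must later be reinsertable. Second, we redraw $R$ so that its only contents are the edges of $\mathcal E$ on its boundary, plus augmentation edges that reconnect clusters. Every augmentation path that passed through $R$ entered and left through $a$, $c$, or the vertices $x,y$, since no real edge crosses the boundary paths. So it suffices to join $a$ and $c$ by a $y$-monotone augmentation edge inside $R$ when $\mu$ needs it; this is possible because $\ell(x)<\ell(a),\ell(c)<\ell(y)$ and $R$ is now empty. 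Connectivity through $x$ or $y$ is preserved because $x,y$ are untouched, and condition~(2) holds because $R$ is empty. Third, we verify tidiness: we need every ear outside $X'$ to be covered. Ears deleted from inside $R$ are $x'y'$-ears with neighbors inside $R$'s closure, hence with neighbors in $\{x,y\}$, and these are lower or upper ears; but an $xy$-ear lies outside the level range $(\ell(x),\ell(y))$ and cannot be inside $R$. Hence no ear is removed, provided we also argue that no non-core vertex whose neighbors are all outside $R$ sits inside $R$.

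The main obstacle is this last bookkeeping. Covering augmentation edges for ears outside $R$ might pass through $R$, or might have triangular incident faces that change when we redraw. We would argue that such an edge would need to cross $xay$ or $xcy$, which is impossible. We also need to rule out that isolated or pendant vertices deleted from $R$ spoil anything. Such vertices are reinsertable by \Cref{lm:tidySubinstance} anyway, since their representatives lie in $C$.
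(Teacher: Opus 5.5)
The gap is in your connectivity step. The claim ``connectivity through $x$ or $y$ is preserved because $x,y$ are untouched'' is false.

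The boundary of $R$ has four vertices $x,a,y,c$. A cluster-connection can run through the interior of $R$ between any two of them that share a cluster, not only between $a$ and $c$. Suppose $T(x)=T(y)=\nu\neq\mu$. Then the only $\nu$-connection between $x$ and $y$ may lie inside one of your lenses. Examples are an augmentation edge $xy$ drawn inside $axby$, or a path $xwy$ through an isolated vertex $w\in T^{-1}(\nu)$ lying there. \Cref{prop:sequences} only steers you away from core vertices, so nothing prevents your chosen triple from having such a connection in its lens. Emptying $R$ then disconnects $\nu$, and adding $ac$ does not repair it.

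The fix, which is the first case of the paper's proof, is to insert the augmentation edge $xy$ inside the emptied region in this situation. You then also need to argue that $xy$ and $ac$ are never both required, since they would cross inside $R$. Planarity gives this: if $x$ and $y$ are cluster-connected through the interior of $R$, that connection separates $a$ from $c$ within $R$. So $a$ and $c$ are not $\mu$-connected through the interior. Every $\mu$-path entering the interior must then leave through the same boundary vertex, and deleting the interior keeps $\mu$ connected.

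The remaining cases are harmless. If $T(x)=\mu$ or $T(y)=\mu$, the real edges already connect $a$ and $c$. If $T(x)\neq T(y)$, no cluster other than $\mu$ has two vertices on the boundary of $R$.

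Your choice of $a,b,c$ via \Cref{prop:sequences}, and your observation that $R$ contains no ears (so tidiness is unaffected), agree with the paper. One minor point: $X'$ should be $X$ minus all vertices in the interior of $R$, not $X\setminus\{b\}$.
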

\begin{proof}Since $X \setminus C$ contains at least $2|C|+3$ equivalent $xy$-transversals, by \Cref{prop:sequences} there are three such transversals $a, b, c$ such that the cycles $axby$ and $bxcy$ have disjoint interiors and neither of these interiors contains a vertex of $C$; see \Cref{fig:transversal}.
    Let $F$ denote the region that is bounded by the cycle~$axcy$.  Note that~$b$ is contained in~$F$ and moreover,~$F$ does not contain any ear.
    Let $X'$ be the subinstance obtained from~$X$ by removing all vertices that lie inside~$F$.  We claim that~$X'$ admits a tidy solution.  To show this, we distinguish between two cases as follows.
    
    If $x$ and $y$ belong to the same cluster $\mu$ and the removal disconnects $\mu$ (see \Cref{fig:transversalA}), then~$\mu$ is different from $T(a) = T(c)$ and $x$ and $y$ are cluster-connected inside~$F$ in $(\mathcal E, \mathcal A)$.
    Hence $a$ and $c$ cannot be cluster-connected inside $F$, since otherwise the two cluster-connections would cross inside $F$.
    We can thus simply add the augmentation edge $xy$ and all clusters remain connected.

    Otherwise, the only cluster that can be disconnected by removing all vertices from $F$ is the cluster $T(a) = T(c)$. In this case, since $a$ and~$c$ are~$xy$-transversals, we can insert the augmentation edge $ac$ and all clusters remain connected; see \Cref{fig:transversalB}.

    Note that in both cases the solution remains tidy, since no ears are affected.
\end{proof}

With the following two lemmas, we reduce the number of ears in solutions.
We start with equivalent upper (lower) ears that lie in a cluster other than their two neighbors in the core.
Subsequently, we use this solution normalization to bound the number of ears with identical neighborhood.
Both normalizations use a similar approach as \Cref{rr:transversal}; see~\Cref{fig:ears}.

\begin{figure}
        \begin{subfigure}{.5\textwidth}
        \centering
        \includegraphics[page=1, scale=0.9]{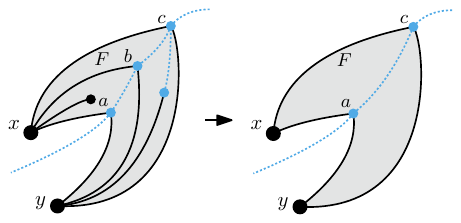}
        \subcaption{}
        \label{fig:earA}
      \end{subfigure}
      \begin{subfigure}{.5\textwidth}
        \centering
        \includegraphics[page=2, scale=0.9]{graphics/pdf/Rule3.pdf}
        \subcaption{}
        \label{fig:earB}
      \end{subfigure}
      \caption{Illustration of \Cref{rr:ears} \textbf{(a)} and \Cref{rr:ears-2} \textbf{(b)}.}
      \label{fig:ears}
    \end{figure}

\newcommand{\ori}{\ensuremath{O_{\circlearrowright}}\xspace}
\newcommand{\ole}{\ensuremath{O_{\circlearrowleft}}\xspace}
\begin{restatable}\restaterefName{rr:ears}{Solution Normalization 2}{lemma}{rrEars}
    \label{rr:ears}
    Let $X$ be a subinstance of $\mathcal I$ that has a tidy solution~$(\mathcal E, \mathcal A)$.
    Let $O \subseteq X \setminus C$ be a set of at least $4|C|+10$ equivalent upper $xy$-ears (lower $xy$-ears) that belong to a different cluster than $x$ and $y$. 
    Then there exists a subinstance $X' \subsetneq X$ of $\mathcal I$ that has a tidy solution.
\end{restatable}
\begin{prooflater}{proofEars}
    Let $\mu$ denote the cluster containing the vertices of $O$.
    For each $v \in O$, consider the directed cycle consisting of the directed edges $xv$ and $vy$, together with an imaginary monotone edge~$yx$.
    If this cycle is oriented in clockwise direction in $(\mathcal E, \mathcal A)$ we define $v$ to be a \emph{clockwise ear} (see $a$, $b$, and $c$ in \Cref{fig:ears} for an example), otherwise it is a \emph{counterclockwise ear}.
    Let \ori and \ole denote the sets of clockwise and counterclockwise ears of $O$, respectively.

    Since $|O| \geq 4|C|+10$, we have $\max\{|\ori|, |\ole|\} \geq 2|C|+5$.
    Without loss of generality, we assume $|\ori| \geq 2|C|+5$.
    Note that $\ori$ does not contain two vertices on the same level and thus $\ell$ induces a linearly ordered sequence $\phi$ whose elements are $\ori$.
    For two vertices $a, b$ that are consecutive in $\phi$, the corresponding cycle $xayb$ contains no other vertices of $O$ in its interior.
    We say that the cycle $xayb$ is \emph{good} if its interior contains an augmentation path between $a$ and $b$ and no vertices of $C$.
    Since $T^{-1}(\mu)$ is connected in $(\mathcal E, \mathcal A)$ and $\mu \notin \{T(x), T(y)\}$, there is at most one consecutive pair $a, b$ of vertices in $\phi$ such that the interior of the cycle $xayb$ contains no augmentation path between $a$ and $b$, hence the consecutive pairs of $\phi$ induce at most $|C| + 1$ cycles that are not good (at most one without augmentation path and at most $|C|$ with vertices of $C$).
    Since $|O| \geq 2|C|+5 = 2(|C| + 1) + 3$, by \Cref{prop:sequences} there are consecutive vertices $a, b, c$ in $\phi$ such that the cycles $axby$ and $bxcy$ are both good.

Let $F$ denote the region bounded by the cycle $axcy$.
    Since $axby$ and $bxcy$ are good cycles,~$F$ contains no vertices of $C$ and an augmentation path between $a$ and $c$.
    Moreover, recall that $\ell(a) < \ell(b) < \ell(c)$ by the definition of $\phi$.
    We can thus remove $b$ and all other vertices contained in $F$ from $(\mathcal E, \mathcal A)$ and add a direct augmentation edge $ac$ inside $F$ to obtain an \yclp solution $(\mathcal E', \mathcal A')$ of a smaller subinstance~$X'$; see \Cref{fig:earA}.
    It remains to show that this solution is tidy.

    First observe that $a$, $b$, and $c$ are all upper or all lower ears and that the two faces incident to $ac$ are triangles.
    Since $a, b$, and $c$ are equivalent, it is clear that $b$ is covered by $ac$.
    Every other ear $v \neq b \in X \setminus C$ is covered by the same augmentation edge $ut$ in $\mathcal A$ and $\mathcal A'$, unless $ut \in \{ab, bc\}$.
    We consider the case $ut = ab$, the other case is analogous.
    Since $v$ is covered by $ab$ in $\mathcal A$, it is $\ell(a) < \ell(v) < \ell(b) < \ell(c)$ and thus $v$ is covered by $ac$ in $\mathcal A'$.
    Since every pair of ears that lie on the same level and have the same neighborhood is covered by two augmentation edges in $\mathcal A$, the same also holds for $\mathcal A'$ and therefore $(\mathcal E', \mathcal A')$ is tidy.
\end{prooflater}

\begin{restatable}\restaterefName{rr:ears-2}{Solution Normalization 3}{lemma}{rrEarsTwo}
    \label{rr:ears-2}
    Let $X$ be a subinstance of $\mathcal I$ with a tidy solution~$(\mathcal E, \mathcal A)$.
    If $X \setminus C$ contains at least $4|C| + 5(4|C|+10) \cdot |T| + 6$ upper $xy$-ears (lower $xy$-ears), then there exists a subinstance $X' \subsetneq X$ of $\mathcal I$ that has a tidy solution.
\end{restatable}
\begin{prooflater}{proofEarsTwo}
    Without loss of generality we consider upper $xy$-ears, the other case is symmetric.
    Let~$K$ be the set of all upper $xy$-ears and let $O \subseteq K$ be the set of upper $xy$-ears that lie in the same cluster as $x$ or $y$.
    Because there are at most $|T|$ equivalence classes that contain $xy$-ears, we have $|K \setminus O| \leq (4|C|+10) \cdot |T|$ after applying \Cref{rr:ears} exhaustively.
    Hence $|K| \geq 4|C| + 5(4|C|+10) \cdot |T| + 6 \geq  4|C| + 5|K \setminus O| + 6$ and therefore $|O| = |K| - |K \setminus O| \geq 4|C| + 4|K \setminus O| + 6 = 4(|C| + |K \setminus O|) + 6$.

    Let \ori and \ole denote the clockwise and counterclockwise upper ears of $O$ as in the proof of \Cref{rr:ears-2}.
    Note that we have $\max\{|\ori|, |\ole|\} \geq 2 (|C| + |K \setminus O|) + 3$.
    Without loss of generality, we assume $|\ori| \geq 2 (|C| + |K \setminus O|) + 3$.
    Let $\phi$ denote the ordered sequence of the elements of $\ori$ induced by $\ell$.
    In contrast to the proof of \Cref{rr:ears-2}, we cannot argue that the cycles induced by consecutive ears of $O$ contain augmentation paths, since every ear of $O$ now lies in the same cluster as $x$ or $y$.
    However, since $|\phi| \geq 2 (|C| + |K \setminus O|) + 3$, by \Cref{prop:sequences} there are three consecutive ears $a, b, c$ in $\phi$ such that the cycles $axby$ and $bxcy$ contain no vertex of~$C$ and no ear of $K \setminus O$ in their interior.
    As above, we can thus remove~$b$ and everything contained in the interior of the cycle $axcy$ from $(\mathcal E, \mathcal A)$ and add a direct augmentation edge $ac$; see \Cref{fig:earB}.
    Observe that all clusters remain connected, since $a$ and~$c$ both belong to the same cluster as $x$ or $y$.
    In particular, if $T(x) = T(y)$, then $T(x) = T(y) = T(a) = T(c)$ and thus $x$ and $y$ remain connected through $a$ and $c$
    Hence we obtain an \yclp solution~$(\mathcal E', \mathcal A')$ of a smaller subinstance~$X'$.
    Note that, while the augmentation edge $ac$ is superfluous for connecting the vertices of the cluster, it ensures that~$(\mathcal E', \mathcal A')$ is tidy as we show in the following.

    Note that the two faces incident to $ac$ are triangles.
    Moreover, since every ear of~$O$ lies the same cluster as $x$ or $y$, we have $\{T(a), T(b), T(c)\} \subseteq \{T(x), T(y)\}$ and thus $ac$ covers $b$.
    As above, every other ear that is covered by an augmentation edge distinct from $ab$ and $bc$ in~$(\mathcal E, \mathcal A)$ remains covered by the same augmentation edge in~$(\mathcal E', \mathcal A')$.
    Consider the case where an ear $v \neq b \in X \setminus C$ is covered by $ab$, the case where $v$ is covered by $bc$ is symmetric.
    Since $ab$ covers $v$, it is $\ell(a) < \ell(v) < \ell(b) < \ell(c)$ and $a$, $v$, and $b$ are equivalent or $\{T(a), T(v), T(b)\} \subseteq \{T(x), T(y)\}$.
    In both cases, we obtain $\{T(a), T(v), T(c)\} \subseteq \{T(x), T(y)\}$ and thus $ac$ covers~$v$.
    Therefore, the solution $(\mathcal E', \mathcal A')$ is tidy.
\end{prooflater}

For each $\{x,y\} \in \binom{C}{2}$, \Cref{rr:ears-2} bounds the number of upper (lower) $xy$-ears and for each cluster $\mu$ of $T$, \Cref{rr:transversal} additionally bounds the number of $xy$-transversals that belong to $\mu$.
Hence, we obtain the following result.

\begin{corollary}
    \label{cor:no-deg2}
    Let $X$ be a subinstance of $\mathcal I$ with a tidy solution that has been exhaustively reduced by \Cref{rr:transversal,rr:ears,rr:ears-2}.
    Then the number of degree-2 vertices in $X$ is in $O(|C|^3 \cdot |T|)$.
\end{corollary}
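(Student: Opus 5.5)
We bound the degree-2 vertices of $X$ by splitting them into three types and counting each separately. A degree-2 vertex of $X$ either lies in $C$ or lies in $X\setminus C$. Vertices of $C$ number at most $|C|$, well within the claimed bound. Every degree-2 vertex $v\in X\setminus C$ is not in the vertex cover $Q$, so both its neighbors $x,y$ lie in $Q\subseteq C_0\subseteq C$. Such a $v$ is therefore either an $xy$-transversal or an upper or lower $xy$-ear for some pair $\{x,y\}\in\binom{C}{2}$. We count these two types separately.

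\textbf{Transversals.} Consider the $xy$-transversals in $X\setminus C$ for a fixed pair $\{x,y\}$. They fall into at most $|T|$ equivalence classes, one per cluster, since equivalence only depends on the common neighborhood $\{x,y\}$ and the cluster. Because $X$ is exhaustively reduced by \Cref{rr:transversal}, each class has at most $2|C|+2$ members in $X\setminus C$. Otherwise the lemma would yield a strictly smaller subinstance with a tidy solution, which we read as "the rule still applies". Summing over the at most $|C|^2$ pairs and $|T|$ clusters gives $O(|C|^3\cdot|T|)$ transversals.

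\textbf{Ears.} Exhaustive application of \Cref{rr:ears-2} bounds, for each pair $\{x,y\}$, the number of upper $xy$-ears in $X\setminus C$ by $4|C|+5(4|C|+10)\cdot|T|+5$. The same bound holds for lower $xy$-ears, so each pair contributes $O(|C|\cdot|T|)$ ears. The hypothesis of \Cref{rr:ears-2} as used in its proof relies on \Cref{rr:ears} having been applied exhaustively, and our assumption provides exactly that. Summing over $|C|^2$ pairs again gives $O(|C|^3\cdot|T|)$.

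Adding the three contributions yields the claimed $O(|C|^3\cdot|T|)$. The only point requiring care is the interpretation of "exhaustively reduced": since each normalization only guarantees the existence of some smaller subinstance with a tidy solution, we read exhaustiveness as saying that none of the lemma hypotheses hold for $X$. The count then follows directly from the negated hypotheses, and the rest is bookkeeping.
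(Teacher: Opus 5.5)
Your proposal is correct and follows the paper's argument. Every non-core degree-2 vertex has both neighbors in $Q\subseteq C$, so for each pair $\{x,y\}$ Solution Normalization~3 bounds the upper and lower $xy$-ears by $O(|C|\cdot|T|)$, and Solution Normalization~1 bounds the $xy$-transversals of each cluster by $2|C|+2$; summing over the $O(|C|^2)$ pairs gives $O(|C|^3\cdot|T|)$, and your reading of ``exhaustively reduced'' as the negation of the count hypotheses is exactly what the paper's one-line justification uses implicitly.
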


It remains to bound the number of vertices of degree at most 1.
With the following solution normalization, we first reduce such vertices of aug-degree at most 1.

\begin{lemma}[Solution Normalization 4]
    \label{rr:deg1}
    Let $X$ be a subinstance of $\mathcal I$ that has a tidy solution~$(\mathcal E, \mathcal A)$.
    If there exists a vertex $v \in X\setminus C$ of degree at most 1 such that either
    \begin{itemize}
        \item $v$ is an isolated vertex of aug-degree at most 1, or
        \item $v$ is a $v'$-pendant vertex and (i) $v$ has aug-degree 0 or (ii) $v$ has aug-degree 1 and $T(v) \neq T(v')$,
    \end{itemize}
    then $X \setminus \{v\}$ also admits a tidy solution.
\end{lemma}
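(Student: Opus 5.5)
The plan is to take the tidy solution $(\mathcal E,\mathcal A)$ of $\mathcal I[X]$ and simply delete $v$ together with its incident real edge (if any) and its incident augmentation edges (at most one, or none). Afterwards I repair whatever breaks. Three things need checking: that $X\setminus\{v\}$ is still a subinstance, that level-planarity and cluster-connectivity survive, and that tidiness survives.

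The first point is immediate because $v\notin C$. Deleting a vertex and edges never creates crossings, and the remaining augmentation edges stay $y$-monotone. Tidiness also causes no trouble. A vertex of degree at most 1 is never an ear, so $v$ need not be covered. Moreover, $v$ can never be an endpoint of a covering augmentation edge, since covering edges join ears. Every ear outside $X\setminus\{v\}$ is either still outside $X$ or is nothing new, because $v$ is not an ear. Finally, the faces incident to a covering edge $ab$ are triangles bounded by ears $a,b$ and their core neighbours. So $v$ does not lie on these faces, and deleting it leaves them unchanged. Hence every covering edge and every pair-covering condition is preserved.

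Cluster-connectivity comes next. Let $\mu=T(v)$. The other clusters are unaffected, since every edge incident to $v$ within a cluster-connection lies in $\mu$. Within $\mu$, $v$ can only act as a cut vertex of $(\mathcal E\cup\mathcal A)[T^{-1}(\mu)]$ if it has at least two incident edges inside $\mu$. I check the cases in turn.

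\begin{itemize}
\item An isolated $v$ of aug-degree at most 1 is a leaf or isolated in this subgraph. Removing it keeps $\mu$ connected, and if it was isolated in its cluster, then $\mu=\{v\}$ and the cluster simply disappears.
\item A pendant $v$ with aug-degree 0 has at most the single edge $vv'$ inside $\mu$, so it is again a leaf.
\item A pendant $v$ with aug-degree 1 and $T(v)\neq T(v')$ has the real edge $vv'$ outside $\mu$. Its only $\mu$-edge is therefore the augmentation edge, so it is once more a leaf.
\end{itemize}

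In all cases $T^{-1}(\mu)\setminus\{v\}$ stays connected. Here I must also confirm that $\mu$ does not become empty in a way that matters; an empty cluster is harmless.

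The remaining condition is (2): no cycle induced by a cluster encloses a foreign vertex. Removing a vertex only removes cycles and foreign vertices, so this condition is preserved.

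The main obstacle I expect is making sure $v$ is not an endpoint of any covering augmentation edge and does not lie inside a covering triangle. The first part follows directly from the definition of covering, which requires the endpoints to be ears. The second follows because the faces are triangles and thus contain no further vertices. Once that is settled, the argument is just the leaf observation above.
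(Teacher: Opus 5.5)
Your proof is correct and follows the paper's argument: you delete $v$, observe that it has at most one same-cluster neighbour in $\mathcal E\cup\mathcal A$ so every cluster stays connected, and note that tidiness survives because $v$ is not an ear and cannot be an endpoint of a covering edge. Your extra checks of condition (2) and of the triangular faces are details the paper leaves implicit, but one justification needs a fix: the third vertex of a triangular face at a covering edge $ab$ need not be a core neighbour of $a,b$, since it may be joined to both by augmentation edges. Still, $v$ cannot be that vertex, because at most one of its incident edges goes to $\{a,b\}$ --- its real edge, if present, ends at $v'\in Q$, while $a,b\notin Q$.
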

\begin{proof}Note that, in each of the cases, $v$ has at most one neighbor in $\mathcal E \cup \mathcal A$ that belongs to the same cluster.
    Hence after removing $v$ from $(\mathcal E, \mathcal A)$, all clusters remain connected.
    Since~$v$ is not an ear and since its augmentation edge cannot cover any ear (as it does not connect two ears), the solution remains tidy.
\end{proof}

\newcommand{\phil}{\ensuremath{\phi_{\text{l}}}\xspace}
\newcommand{\phir}{\ensuremath{\phi_{\text{r}}}\xspace}

While degree-2 vertices are fairly simple to reduce since many equivalent degree-2 vertices also form many cycles with their two common neighbors, reducing degree-1 vertices is much more difficult. 
For a solution $(\mathcal E, \mathcal A)$, an \emph{induced augmentation path} is an induced path in $\mathcal A$ whose vertices all belong to $X \setminus C$, have aug-degree 2, and degree at most 1.
In particular, note that the endpoints of such an induced augmentation path $\pi$ also have aug-degree 2 and thus have an augmentation edge to a vertex outside of $\pi$; see \Cref{fig:shrinkingA} for an example.
After exhaustively applying the normalizations above, we can show that, if many degree-1 vertices remain, we also find an induced augmentation path that contains many degree-1 vertices.
For such a path, we can again find a sufficient number of cycles along which we can reduce; see \Cref{fig:deg-1-paths}.

\begin{figure}
    \begin{subfigure}{.32\textwidth}
        \centering
        \includegraphics[page=1,scale=0.9]{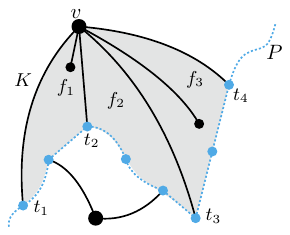}
        \subcaption{}
        \label{fig:deg-1-pathsA}
    \end{subfigure}
    \begin{subfigure}{.32\textwidth}
        \centering
        \includegraphics[page=2,scale=0.9]{graphics/pdf/Rule5}
        \subcaption{}
        \label{fig:deg-1-pathsB}
    \end{subfigure}
    \begin{subfigure}{.32\textwidth}
        \centering
        \includegraphics[page=3,scale=0.9]{graphics/pdf/Rule5}
        \subcaption{}
        \label{fig:deg-1-pathsC}
    \end{subfigure}
    \caption{Illustration of \Cref{rr:deg-1-paths}. 
    \textbf{(a)} The region $K$ bounded by the three consecutive cycles $f_1, f_2, f_3$.
    \textbf{(b)} A corresponding straight-line drawing where $K$ is a simple polygon. One possible diagonal $ab$ that can be used to shortcut $P$ is highlighted as a dashed red edge.
    \textbf{(c)} The drawing obtained after removing all vertices inside $K$ and shortcutting $P$ along $ab$.
    }
    \label{fig:deg-1-paths}
\end{figure}

\begin{restatable}\restaterefName{rr:deg-1-paths}{Solution Normalization 5}{lemma}{rrDegOnePaths}
    \label{rr:deg-1-paths}
    Let $X$ be a subinstance of $\mathcal I$ with a tidy solution~$(\mathcal E, \mathcal A)$.
    If there is an induced augmentation path $P$ that contains at least $8(|C|^2+|C| + 1)$ pendant vertices, there exists a subinstance $X' \subsetneq X$ of $\mathcal I$ that has a tidy solution.
\end{restatable}
\begin{proofsketch}
    Consider the situation shown in \Cref{fig:deg-1-pathsA}, where a vertex $v \in C$ together with a subpath of $P$ forms three consecutive cycles $f_1$, $f_2$, and $f_3$ that contain no vertex of $C$.
    By the work of Pach and T{\'{o}}th~\cite{pach_monotone_2004}, we can assume that the solution $(\mathcal E, \mathcal A)$ is a straight-line drawing and thus the region $K$ that is the union of the interiors of $f_1$, $f_2$, and $f_3$ is a simple polygon of size at least~5; see \Cref{fig:deg-1-pathsB}.
    By \Cref{prop:polygon-diagonal}, we thus find a diagonal of $K$ that is not incident to $v$ along which we can shortcut to obtain a tidy solution of a smaller subinstance; see \Cref{fig:deg-1-pathsC}.
    If $P$ contains sufficiently many pendant vertices, we can show that we always find three such consecutive cycles due to the planarity of $(\mathcal E, \mathcal A)$.
\end{proofsketch}
\begin{prooflater}{proofDegOnePaths}
    An augmentation edge $e$ is \emph{superfluous} in a tidy solution $(\mathcal E, \mathcal A)$ if $(\mathcal E, \mathcal A - e)$ is also a tidy solution.  If $(\mathcal E, \mathcal A)$ contains superfluous augmentation edges, we first remove them.
    We can thus assume that all vertices of $P$ belong to the same cluster, since no ears are involved.
    As the next step, for each vertex $v \in C$, pick an arbitrary pendant neighbor of $v$ in $P$ (if one exists) and mark it as \emph{critical}.
    We will ensure that these critical vertices are also contained in the subinstance $X'$ and use them to ensure that the clusters remain connected.
    Note that there are at most $|C|$ critical vertices.

    Since the first and the last vertex of $P$ are each incident to exactly one augmentation edge that does not belong to~$P$, these two augmentation edges partition the edges of $\mathcal E$ incident to~$P$ into edges that lie on the left and the right side of~$P$, respectively.
    Let \phil be the sequence of vertices obtained by enumerating the endpoints of all edges incident to the left side of~$P$ in the order induced by $\mathcal E$.
    Analogously, we define the sequence \phir for the right side of~$P$.
    Note that all elements of $\phil$ and $\phir$ are contained in $C$.
    Moreover, since every pendant of~$P$ contributes exactly one element to $\phil$ or $\phir$, we have $\max\{|\phil|, |\phir|\} \geq 4(|C|^2 + |C| + 1)$.
    Without loss of generality, we assume $|\phil| \geq 4(|C|^2 + |C| + 1)$.
    
    Assume that there exist two distinct vertices $a, b \in C$ such that $\phil$ contains the consecutive sequence $ab$ more than once.
    Let $e_a^1, e_b^1, e_a^2, e_b^2$ denote the corresponding edges in the order induced by $\phil$.
    Consider the cycle $c$ induced by $e_a^1$, $e_a^2$, and the subpath of $P$ connecting them.
    Note that the given order implies that $e_b^1$ and $e_b^2$ lie on different sides of $c$ in $(\mathcal E, \mathcal A)$.
    But since $e_b^1$ and $e_b^2$ share the endpoint $b$ that does not lie on $c$, this is a contradiction to the planarity of $(\mathcal E, \mathcal A)$.
    Therefore, $\phil$ contains each of the $|C| \cdot (|C| -1)$ distinct consecutive pairs at most once.
    Since $|\phil| \geq 4(|C|^2 + |C| + 1)$, note that we find $|C|^2 + |C| + 1$ disjoint consecutive quadruplets in $\phil$.
    Since $\phil$ contains each distinct consecutive pair at most once, there are at least $|C|^2 + |C| + 1 - |C| \cdot (|C| -1) = 2|C| + 1$ such quadruplets that contain four copies of the same vertex.
    We refer to these quadruplets as \emph{uniform quadruplets}.
    
    We now consider uniform quadruplets in $(\mathcal E, \mathcal A)$.
    Since $G$ has no multi-edges, each uniform quadruplet corresponds to four distinct pendant vertices $t_1, t_2, t_3, t_4$ in $P$ that are adjacent to a common vertex $v \in C$.
    Note that $t_i$ and $t_{i+1}$ need not be adjacent in $P$, since there may be vertices between them that are isolated or whose corresponding edges lie to the right of~$P$.
    However, the edges $vt_1$, $vt_2$, $vt_3$, $vt_4$, together with the subpath of $P$ between $t_1$ and $t_4$ form three consecutive cycles $f_1$, $f_2$, and $f_3$ of $(\mathcal E, \mathcal A)$; see \Cref{fig:deg-1-pathsA}.

Note that the interiors of cycles corresponding to disjoint uniform quadruplets cannot be nested, as this would contradict the assumption that the edges $vt_1$, $vt_2$, $vt_3$, and $vt_4$ all lie to the left of~$P$.
    Therefore, because there are at least $2|C|+1$ uniform quadruplets, there exists a uniform quadruplet such that $f_1$, $f_2$, and $f_3$ enclose no vertex of $C$.
    Moreover, observe that every vertex of $V(G) \setminus C$ enclosed by one of these cycles can only be connected to $v$.
    In particular, $f_1$, $f_2$, and $f_3$ enclose no ears and no augmentation edges that cover ears of $V(G) \setminus X$.
    Therefore, we can remove everything enclosed by $f_1$, $f_2$, and $f_3$ and we obtain a tidy solution where $f_1$, $f_2$ and $f_3$ are empty.

    Consider the region $K$ of $(\mathcal E, \mathcal A)$ that is the union of the interiors of $f_1$, $f_2$, and $f_3$, i.e., the boundary of $K$ consists of $t_4$, $v$, $t_1$, and the subpath of $P$ between $t_1$ and $t_4$.
    By the work of Pach and T{\'{o}}th~\cite{pach_monotone_2004}, we can assume that all edges of $(\mathcal E, \mathcal A)$ are drawn as straight line segments without changing the $y$-coordinates of the vertices and thus $K$ can be considered as a polygon of size at least 5; see \Cref{fig:deg-1-pathsC}.
    Note that the interior of $K$ only contains the edges $vt_2$ and $vt_3$ as straight line segments and is otherwise empty.
    By \Cref{prop:polygon-diagonal}, there are vertices $a, b \neq v$ on the boundary of $K$ that are non-adjacent and can be connected by a $y$-monotone edge in~$K$.
    Since $a, b \neq v$, $a$ and $b$ are contained in $P$.
    Remove all vertices that lie between $a$ and $b$ in $P$ from the solution and ``shortcut'' $P$ by adding the $y$-monotone augmentation edge $ab$; see \Cref{fig:deg-1-pathsC}.
    If neither $t_2$ nor $t_3$ lie between $a$ and $b$ in $P$, then $a$ and $b$ share one of the cycles $f_1$, $f_2$, or $f_3$ and thus $ab$ can be drawn entirely within the corresponding face without introducing crossings with $vt_2$ or $vt_3$.
    If $t_2$ or $t_3$ lies between $a$ and $b$ in $P$, then the vertex was removed along with its incident edge and we thus also obtain no crossings in this case.
    Because $a$ and $b$ are not adjacent, we removed at least one vertex between $a$ and $b$ in $P$ and we thus obtain a solution for a smaller subinstance $X'$.
    In particular, since we did not remove any critical vertices of $P$, each vertex that had a neighbor in $P$ before still has a neighbor in the shortened augmentation path and thus the cluster of $P$ remains connected.
    Since no ears are involved, also observe that the solution remains tidy.
\end{prooflater}

\begin{figure}
    \begin{subfigure}{.5\textwidth}
        \centering
        \includegraphics[page=1]{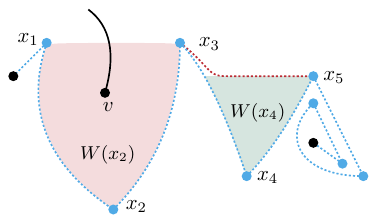}
        \subcaption{}
        \label{fig:shrinkingA}
    \end{subfigure}
    \begin{subfigure}{.5\textwidth}
        \centering
        \includegraphics[page=2]{graphics/pdf/shrinkingProp.pdf}
        \subcaption{}
        \label{fig:shrinkingB}
    \end{subfigure}
    \caption{\textbf{(a)} 
    An induced augmentation path $\pi$ (blue vertices) that consists of isolated vertices and has the shrinking property.
The wedges $W(x_2)$ and $W(x_4)$ of the valleys $x_2$ and $x_4$ are highlighted in red and green, respectively.
    The valley $x_2$ is blocked by $v$.
    Because $x_4$ is not blocked by any vertex, the augmentation edge $x_3x_5$ (red dashed edge) can be inserted crossing-free in a $y$-monotone fashion.
    Thus $x_4$ can be removed to obtain a solution of a smaller subinstance \textbf{(b)}.
    }
    \label{fig:shrinking}
\end{figure}

It remains to bound isolated vertices.
This is the most difficult step, as isolated vertices are not connected to the core and thus the approaches in the normalizations above do not work.
However, we show that, if a solution contains many isolated vertices after applying the normalizations above, these isolated vertices decompose into few induced augmentation paths that have a special structure, which we can use to bound the length of these paths.

Let~$\pi$ be an induced augmentation path.  A vertex~$v$ on~$\pi$ is a \emph{peak} if both augmentation-neighbors of $v$ lie on a level below~$\ell(v)$ and it is a \emph{valley} if both its augmentation-neighbors lie on a level above~$\ell(v)$.  We say that~$\pi$ has the \emph{shrinking property} if (i) its vertices are alternatingly peaks and valleys and (ii) it can be oriented such that for each peak~$p$ of~$\pi$ all peaks~$p'$ later on the path satisfy~$\ell(p') \le \ell(p)$ and for each valley~$v$ of~$\pi$ all valleys~$v'$ later on the path satisfy~$\ell(v') \ge \ell(v)$; see \Cref{fig:shrinkingA} for an example.

\begin{figure}
    \begin{subfigure}{.37\textwidth}
        \centering
        \includegraphics[page=1]{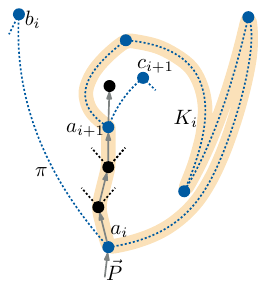}
        \subcaption{}
        \label{fig:non-separatingA}
    \end{subfigure}
    \begin{subfigure}{.37\textwidth}
        \centering
        \includegraphics[page=2]{graphics/pdf/non-separating.pdf}
        \subcaption{}
        \label{fig:non-separatingB}
    \end{subfigure}
    \begin{subfigure}{.24\textwidth}
        \centering
        \includegraphics[page=3]{graphics/pdf/non-separating.pdf}
        \subcaption{}
        \label{fig:non-separatingC}
    \end{subfigure}
    \caption{ Examples where the cycle $K_i$ (highlighted in orange) is separating \textbf{(a)} and non-separating~\textbf{(b)}. 
    The paths $\pi$ and $\vec P$ are represented by blue dashed edges and gray directed edges, respectively.
    The region $R_i$ corresponding to the non-separating cycle $K_i$ is highlighted in light green.
    \textbf{(c)} If $K_i$ is non-separating and no path of $\mathcal P$ ends in the region $R_i$, then there exist two adjacent vertices of $\vec P[a_i, a_{i+1}]$ that belong to the same path of $\mathcal P$.
    }
    \label{fig:non-separating}
\end{figure}

\begin{figure}
    \begin{subfigure}{.34\textwidth}
        \centering
        \includegraphics[page=1]{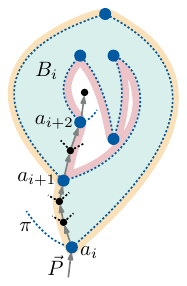}
        \subcaption{}
        \label{fig:separatingA}
    \end{subfigure}
    \begin{subfigure}{.32\textwidth}
        \centering
        \includegraphics[page=2]{graphics/pdf/separating.pdf}
        \subcaption{}
        \label{fig:separatingB}
    \end{subfigure}
    \begin{subfigure}{.32\textwidth}
        \centering
        \includegraphics[page=3]{graphics/pdf/separating.pdf}
        \subcaption{}
        \label{fig:separatingC}
    \end{subfigure}
    \caption{\textbf{(a)} Cycles $K_i$ (yellow) and $K_{i+1}$ (red) that are both separating and the corresponding region $B_i$. \textbf{(b)} If no path in $\mathcal P$ ends in the region $B_i$, then each path that enters $B_i$ through $\vec P[a_i,a_{i+1}]$ leaves it through $\vec P[a_{i+1},a_{i+2}]$ and vice versa. A tidy solution of a smaller subinstance can be obtained by replacing $\pi[a_i,a_{i+1}]$ and all paths in $B_i$ with single $y$-monotone edges in the region that is the union of $S_0$, $S_1$, and $S_2$ \textbf{(c)}. 
    }
    \label{fig:separating}
\end{figure}

\begin{restatable}\restaterefName{rr:isolated}{Solution Normalization 6}{lemma}{rrIsolated}
\label{rr:isolated}
    Let $X$ be a subinstance of $\mathcal I$ with a tidy solution~$(\mathcal E, \mathcal A)$, let 
    $\mathcal P$ be a set of pairwise disjoint and non-adjacent induced augmentation paths that satisfy the shrinking property and consist of isolated vertices.  Let ${L = \bigcup_{P \in \mathcal P} V(P) \subseteq X}$ be the set of vertices contained in a path of $\mathcal P$.
    If $|L| \geq {6(|X \setminus L|^2 +1) \cdot |\mathcal P|}$, there exists a subinstance $X' \subsetneq X$ of $\mathcal I$ that has a tidy solution.
\end{restatable}
\begin{proofsketch}
    Since every path of $\mathcal P$ has the shrinking property, every vertex $v \in L$ is either a valley or a peak.
    We consider the region $W(v)$ bounded by the two incident edges of $v$ and a horizontal line on the level of the neighbor whose level is closer to the level of $v$ (see \Cref{fig:shrinkingA}) and we say that a vertex $w$ \emph{blocks} $v$ if $w$ lies in $W(v)$.
    If a vertex $v \in L$ is not blocked by any vertex, we find a smaller subinstance that admits a tidy solution as illustrated in \Cref{fig:shrinkingB}, hence we assume that every vertex of $L$ is blocked.

    We define a directed \emph{blocker graph $\mathcal B$} on $V(X)$ by choosing for each $v \in L$ a vertex $u \in X$ minimizing~$|\ell(v) - \ell(u)|$ that blocks $v$ and inserting the directed edge $vu$.
    We can show that~$\mathcal B$ is a DAG whose sinks are the vertices of $X \setminus L$ and that consists of a bounded number of edge-disjoint paths.
    Since $|L|$ is sufficiently large, $\mathcal B$ contains a long directed path $\vec P$ and there exists a path $\pi \in \mathcal P$ such that $\vec P$ contains many vertices of $\pi$. We denote these vertices by~$a_1,\dots, a_t$ in the order in which they occur along~$\vec P$; see \Cref{fig:non-separating}.
    For $i \in \{1,\dots,t-1\}$, we denote by~$K_i$ the cycle formed by~$\pi[a_i,a_{i+1}]$ and~$\vec P[a_i,a_{i+1}]$.  Let~$b_i$ denote the neighbor of~$a_i$ on~$\pi$ that is not in~$K_i$ and let~$c_{i+1}$ be the neighbor of~$a_{i+1}$ on~$\pi$ that is not in~$K_i$.  We say that~$K_i$ is \emph{separating} if~$b_i$ and~$c_{i+1}$ lie on different sides of~$K_i$ in~$(\mathcal E',\mathcal A')$ and \emph{non-separating} otherwise; see \Cref{fig:non-separating} for examples.

    If~$K_i$ is non-separating, we let~$R_i$ denote the region enclosed by~$K_i$ that does not contain~$b_i$ and~$c_{i+1}$; see \Cref{fig:non-separatingB}.
    If $R_i$ contains no vertex of $X \setminus L$, no path of $\mathcal P$ can end in the region $R_i$ and thus each path that enters the region through a vertex of $\vec P[a_i,a_{i+1}]$ also leaves it through a vertex of $\vec P[a_i,a_{i+1}]$; see \Cref{fig:non-separatingC}.
    Due to planarity, this means that we find two vertices that belong to the same path of $\pi' \in \mathcal P$ and are adjacent in $\vec P$.
    Due to the definition of an edge in $\vec P$, we can insert a $y$-monotone augmentation edge between these two vertices and thus shortcut $\pi'$ for a tidy solution of a smaller subinstance.

    It thus remains to deal with separating $K_i$.
    Consider $i \in \{1, \dots, t-2\}$ such that both $K_i$ and $K_{i+1}$ are separating.
    Then we define the region $B_i$ as the connected region bounded by $K_i$ and $K_{i+1}$; see \Cref{fig:separatingA}.
    If $B_i$ contains no vertex of $X \setminus L$, no path of $\pi$ ends in $B_i$ as in the non-separating case.
    If a path of $\pi$ enters and leaves $B_i$ through vertices of $\vec P[a_i,a_{i+1}]$ (of $\vec P[a_{i+1},a_{i+2}]$), we again find two adjacent vertices of $\vec P$ that belong to the same path of $\mathcal P$ and can shortcut as in the non-separating case.
    Hence every path enters $B_i$ through $\vec P[a_i,a_{i+1}]$ and leaves it through $\vec P[a_{i+1},a_{i+2}]$; see \Cref{fig:separatingB}.
    We can then remove all vertices that lie between $a_i$ and $a_{i+1}$ and all paths in $B_i$ and replace each of them with a single augmentation edge that we route through the region $S_0 \cup S_1 \cup S_2$ highlighted in \Cref{fig:separatingB} to obtain a tidy solution to a smaller subinstance; see \Cref{fig:separatingC}.

    Because $|L|$ (and thus $t$) is sufficiently large compared to $|X \setminus L|$, we can show that there exists an $i \in \{1, \dots, t-2\}$ such that (i) $K_i$ is non-separating and $R_i$ contains no vertex of $X \setminus L$, or (ii) $K_i$ and $K_{i+1}$ are separating and $B_i$ contains no vertex of $X \setminus L$.
    As shown above, we find a smaller subinstance that admits a tidy solution in both cases.
\end{proofsketch}
\begin{prooflater}{proofIsolated}
    As above, after removing superfluous augmentation edges, we may assume that the vertices of each path of $\mathcal P$ belong to a single cluster.
    Since all vertices of paths in $\mathcal P$ have aug-degree~2 and the paths in $\mathcal P$ are pairwise disjoint and non-adjacent, each path of $\mathcal P$ has two augmentation edges connecting it to vertices of $X \setminus L$.
    We call these vertices the \emph{attachments} of the path.
Consider a vertex~$v$ of a path in~$\mathcal P$.
    We refer to the neighbor of $v$ whose level is closer to the level of $v$ as $\cl(v)$, the other neighbor is called $\fr(v)$; we break ties arbitrarily.
    The \emph{wedge $W(v)$} of $v$ is the region enclosed by the two augmentation edges incident to $v$ and a horizontal line~$l$ through $\cl(v)$ in the solution~$(\mathcal E,\mathcal A)$; see \Cref{fig:shrinkingA}.  We define the wedge~$W(v)$ so that the parts of the augmentation edges that bound~$W(v)$ are not part of the region, but the part of~$l$ that bounds the region is. A vertex~$w$ \emph{blocks} the vertex~$v$ if~$w$ lies in $W(v)$ in~$(\mathcal E,\mathcal A)$.  
    
    If vertex~$v$ is not blocked by any vertex, then~$W(v)$ is empty in~$(\mathcal E,\mathcal A)$. In this case, we can insert a direct $y$-monotone edge between $\fr(v)$ and $\cl(v)$ following the edge $\fr(v)v$ and the boundary of $W(v)$; see \Cref{fig:shrinkingB}. We can then remove $v$ and obtain a tidy solution of a smaller subinstance.  In the following, we thus assume that every vertex of $L$ is blocked.

    We define a directed \emph{blocker graph $\mathcal B$} on $V(X)$ by choosing for each $v \in L$ a vertex $u \in X$ minimizing~$|\ell(v) - \ell(u)|$ that blocks $v$ and inserting the directed edge $vu$.
    
    \begin{claim}
    \label{cl:blocker-properties}
        The blocker graph $\mathcal B$ has the following properties:
        \begin{enumerate}
            \item every sink is contained in $X \setminus L$,
            \item if a valley $v$ has an outgoing edge $vu$, then $\ell(v) < \ell(u)$ and $u$ is not a peak,
            \item if a peak $v$ has an outgoing edge $vu$, then $\ell(v) > \ell(u)$ and $u$ is not a valley,
            \item $\mathcal B$ is acyclic, 
            \item every vertex has outdegree at most 1,
            \item every peak and every valley has indegree at most 1 and every vertex of $X \setminus L$ has indegree at most 2,
            \item\label{it:common} for each directed path $\vec P$ of $\mathcal B$ and each path $\pi \in \mathcal P$, $\vec P$ and $\mathcal B$ visit their common vertices in the same order, and
            \item for each directed path $\vec P$ of $\mathcal B$, the edges of $\vec P$ can be inserted planarly and in a $y$-monotone fashion into $(\mathcal E, \mathcal B)$.
        \end{enumerate}
    \end{claim}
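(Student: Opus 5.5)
The plan is to derive the eight properties essentially in the listed order. Items 1--5 follow from the geometry of a single wedge together with the minimality in the definition of $\mathcal B$. Items 6--8 need planarity of $(\mathcal E,\mathcal A)$ and the shrinking property; I read item~\ref{it:common} as saying that $\vec P$ and $\pi$ visit their common vertices in the same order, and item~8 as inserting the edges into $(\mathcal E,\mathcal A)$.

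\emph{Items 1, 5 and 2--4.} By the preceding normalization step every $v\in L$ is blocked, so it receives exactly one outgoing edge, and vertices of $X\setminus L$ receive none. This gives items~1 and~5 at once. For item~2, let $v$ be a valley. Both augmentation neighbours of $v$ lie above $v$, so $W(v)$ lies in the horizontal strip between $\ell(v)$ and $\ell(\cl(v))$. On level $\ell(v)$ the wedge consists only of the apex $v$, and $v$ is excluded because it lies on the bounding edges; hence any blocker $u$ satisfies $\ell(u)>\ell(v)$. Now suppose $u$ were a peak. Its two augmentation edges leave $u$ downwards. They cannot cross the boundary edges of $W(v)$, and they cannot reach $v$, since the paths of $\mathcal P$ are non-adjacent and $v$'s neighbours are not interior to $W(v)$. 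So a neighbour of $u$ lies in $W(v)$ strictly closer in level to $v$, contradicting the choice of $u$ as a blocker minimizing $|\ell(v)-\ell(u)|$. Item~3 is symmetric. For item~4, items~2 and~3 show that every edge leaving a valley goes to a valley or to $X\setminus L$, and every edge leaving a peak goes to a peak or to $X\setminus L$, with strictly monotone levels in each case. Vertices of $X\setminus L$ are sinks, so no directed cycle exists.

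\emph{Item 6.} Suppose two valleys $v\neq v'$ both choose $u$. Then $u\in W(v)\cap W(v')$. By planarity and $y$-monotonicity, the two wedges are either nested or the bounding augmentation paths interleave, and interleaving would force a crossing. So, say, $W(v')\subseteq W(v)$ near $u$, which puts $v'$ itself in $W(v)$ at a level strictly between $\ell(v)$ and $\ell(u)$. This contradicts the minimal choice of $u$ for $v$. For peaks the argument is symmetric, and by items~2 and~3 a peak or valley can only be pointed to by its own type. A vertex of $X\setminus L$ can be pointed to by at most one valley and at most one peak, which gives indegree at most~2.

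\emph{Items 7 and 8.} For item~8, take an edge $vu$ with $v$ a valley. By minimality, the part of $W(v)$ strictly below level $\ell(u)$ contains no vertex. Any edge meeting this part therefore crosses it from boundary line to boundary line as a $y$-monotone curve, so I can route $vu$ inside $W(v)$, following the boundary edge $v\,\cl(v)$ up to level $\ell(u)$ and then moving horizontally to $u$, without crossings. To handle a whole directed path $\vec P$ simultaneously, I use item~6 and the nesting argument: the wedges used by consecutive edges of $\vec P$ are nested or disjoint, so these routings can be drawn without crossing one another. For item~\ref{it:common}, I orient $\pi$ as in the shrinking property; along $\pi$ the valleys then have non-decreasing levels. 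Along $\vec P$ the levels strictly increase by item~2. If two common vertices appeared in opposite orders, the nesting of their wedges, which is inherited from the shrinking structure, combined with the planar insertion from item~8 would produce a crossing between $\vec P$ and $\pi$.

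I expect the main obstacle to be items~6--8. They need a careful case analysis of how a wedge can be bounded or entered by other augmentation edges and real edges (for instance the horizontal boundary line passing through other vertices), so the nesting lemma for wedges should be isolated first and then reused for all three items.
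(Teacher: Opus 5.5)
Your items 1--6 follow the paper's proof. In item 6, your dichotomy (``nested or interleaving, and nesting puts $v'$ into $W(v)$ strictly between $\ell(v)$ and $\ell(u)$'') is a rearrangement of the paper's argument: neither valley lies in the other's wedge, so the crossings on level $\ell(u)$ alternate. The weak spots are items 7 and 8. There you rely on wedge-nesting claims that are unproved or false, while the paper needs only the strict level monotonicity from items 2 and 3.

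In item 7, your last sentence is an assertion, not an argument. It claims that nesting ``inherited from the shrinking structure'' plus the insertion from item 8 ``would produce a crossing'', but nothing you have shown makes the wedges of two common vertices nested. Drop it and finish directly with the two facts you already state:
\begin{itemize}
\item By items 2 and 3, a directed path of $\mathcal B$ consists only of valleys (or only of peaks), possibly followed by a single sink in $X\setminus L$. That sink is not on $\pi$, since $\pi\subseteq L$.
\item Hence all common vertices of $\vec P$ and $\pi$ are valleys, and they have pairwise distinct levels because levels strictly increase along $\vec P$.
\item In the shrinking orientation of $\pi$, valleys have non-decreasing levels. So both paths visit the common vertices in order of increasing level.
\end{itemize}

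In item 8, two details need repair.
\begin{itemize}
\item Routing $vu$ along $v\,\cl(v)$ and then horizontally on level $\ell(u)$ can fail. The minimizer $u$ need not be unique, so other vertices of $W(v)$, with edges ending at them from above, may lie on level $\ell(u)$ between that edge and $u$. Instead, draw a strictly $y$-monotone curve through the part of $W(v)$ strictly below $\ell(u)$ and reach $u$ from below. This part contains no vertex by minimality, and therefore no edge at all. An edge entering it through level $\ell(u)$ cannot cross the edges of $v$, cannot turn back upward by monotonicity, and cannot end in it ($v$ is isolated with aug-degree~2).
\item For inserting all edges of $\vec P$ simultaneously, item 6 is irrelevant. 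Your claim that the wedges of consecutive path vertices are nested or disjoint is false in general: $W(a_2)$ may rise above the top line of $W(a_1)$ while still overlapping $W(a_1)$ near $a_2$. The correct reason, as in the paper, is level separation. The edge $a_ia_{i+1}$ is drawn within the levels between $\ell(a_i)$ and $\ell(a_{i+1})$. Since levels are strictly monotone along $\vec P$, the level ranges of distinct edges meet only in a shared endpoint.
\end{itemize}
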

    \begin{claimproof}[Proof of claim]
        Since we assume that every vertex of $L$ is blocked, it follows that all sinks of $\mathcal B$ are in $X \setminus L$.  This shows the first property.

        For the second property, let~$vu$ be an edge such that~$v$ is a valley.  Since~$u$ blocks $v$, we have~$\ell(v) < \ell(u)$.  Assume for the sake of contradiction that $u$ is a peak. Since $u$ blocks $v$, $u$ is the vertex in the region $W(v)$ with the lowest level. But since $u$ is a peak, the two neighbors of $u$ lie on a strictly lower level than $u$.
        Due to planarity and $y$-monotonicity, both these neighbors lie inside $W(v)$ as well.  This contradicts the fact that $u$ is the lowest vertex in this region.  This concludes the proof of the second property.  The third property follows by an analogous argument.
        
        For the fourth property, observe that, by the second and third properties, the levels of vertices on any directed path of $\mathcal B$ strictly increase or decrease and thus $\mathcal B$ contains no directed cycles.
        
        The fifth property follows immediately from the construction of~$\mathcal B$.

        For the sixth property, consider a vertex~$v$ and assume that it has incoming $\mathcal B$-edges~$uv$ and~$wv$ from two valleys~$u,w$.  By the second property, we have~$\ell(u), \ell(w) < \ell(v)$.  We thus have~$u \notin W(w)$ and~$w \notin W(u)$ as otherwise, we would have a contradiction to the fact that~$v$ is a lowest vertex in these regions. 
        Because $v \in W(w) \cap W(u)$, $u \notin W(w)$, $w \notin W(u)$ and $\ell(u), \ell(w) < \ell(v)$, the points in which the two $y$-monotone augmentation edges incident to $w$ cross level $\ell(v)$ alternate with the points in which the two $y$-monotone augmentation edges incident to~$u$ cross level $\ell(v)$.
        But since $\ell(u), \ell(w) < \ell(v)$, this means that two of these augmentation edges cross, a contradiction.
Thus, any vertex has at most one incoming~$\mathcal B$-edge from a valley and, by symmetry, at most one incoming $\mathcal B$-edge from a peak. Since peaks have incoming $\mathcal B$-edges only from peaks and valleys have incoming $\mathcal B$-edges from valleys by properties 2 and 3, property 6 follows.

        For the seventh property, consider a directed path $\vec P$ of $\mathcal B$, a path $\pi \in \mathcal B$, and let $v$ denote the first vertex of $\vec P$ that is also contained in $\pi$.
        Without loss of generality, assume that $v$ is a valley.
        By property 2, all subsequent vertices of $\vec P$ lie on a strictly higher level than $v$.
        Because $\pi$ has the shrinking property, the same also holds for all vertices after $v$ in $\pi$.
        Hence no vertex that lies before $v$ in $\vec P$ lies before $v$ in $\pi$ and thus $v$ is the first common vertex in $\vec P$ and $\pi$.
        By induction, it follows that $\vec P$ and $\pi$ visit their common vertices in the same order and thus property 7 holds.
        
        For the final property, consider a directed path $\vec P$ of $\mathcal B$.
        For an edge $uv$ of $\vec P$, recall that the region $W(u)$ contains no vertex whose level lies strictly between $\ell(u)$ and $\ell(v)$.
        The subregion of $W(u)$ restricted by a horizontal line on level $\ell(v)$ thus contains no vertices and, due to $y$-monotonicity, also no edges.
        Hence $uv$ can be drawn crossing-free in a $y$-monotone fashion into this region.
        By properties 2 and 3, the levels spanned by two distinct edges can only coincide at a common endpoint.
        Therefore, inserting all edges of $\vec P$ at the same time does not introduce crossings between the inserted edges.
    \end{claimproof}

    Assume that there is an edge~$uv$ in~$\mathcal B$ whose endpoints both belong to the same path~$\pi$ in~$\mathcal P$.
    We can then insert the edge~$uv$ planarly into~$(\mathcal E,\mathcal A)$ by \Cref{cl:blocker-properties} and remove all interior vertices of~$\pi[u,v]$ together with their incident augmentation edges to obtain a tidy solution to a subinstance $X' \subseteq X$.  Observe that~$\pi[u,v]$ has at least one internal vertex, since $u$ and~$v$ are both peaks or both valley and~$\pi$ has the shrinking property.  Thus,~$|X'| < |X|$.  In what follows, we thus assume that no two vertices from the same path in~$\mathcal P$ are connected by an edge in~$\mathcal B$.

    Now consider the graph~$\mathcal B$. By \Cref{cl:blocker-properties}, $\mathcal B$ is the edge-disjoint union of at most~$2\cdot |X \setminus L|$ maximal directed paths with sinks $X \setminus L$.    
    Since~$|L| \geq 6 \cdot (|X \setminus L|^2 +1) \cdot |\mathcal P|$, it follows that~$\mathcal B$ contains a directed path~$\vec P$ of length at least~$|\vec P| \geq (6 \cdot (|X \setminus L|^2 +1) \cdot |\mathcal P|)/ (2 \cdot |X \setminus L|) > 3 \cdot (|X \setminus L|+1) \cdot |\mathcal P|$.  We consider the case that all but the last vertex of~$\vec P$ are valleys; the case that they are all peaks can be handled analogously.     
    Since~$|\vec P| \geq 3 \cdot (|X \setminus L|+1) \cdot |\mathcal P|$, we have that there is a path~$\pi \in \mathcal P$ such that~$\vec P$ contains at least $t = 3 \cdot (|X \setminus L| + 1)$ vertices from~$\pi$.  We denote these vertices by~$a_1,\dots, a_t$ in the order in which they occur along~$\vec P$.  We now temporarily consider the solution~$(\mathcal E',\mathcal A')$ obtained from~$(\mathcal E, \mathcal A)$ by planarly inserting all edges of~$\vec P$, which is possible by \Cref{cl:blocker-properties}.  Thus~$\vec P$ is represented by a curve in~$(\mathcal E',\mathcal A')$.  For $i \in \{1,\dots,t-1\}$, we denote by~$K_i$ the cycle formed by~$\pi[a_i,a_{i+1}]$ and~$\vec P[a_i,a_{i+1}]$.  Let~$b_i$ denote the neighbor of~$a_i$ on~$\pi$ that is not in~$K_i$ and let~$c_{i+1}$ be the neighbor of~$a_{i+1}$ on~$\pi$ that is not in~$K_i$.  We say that~$K_i$ is \emph{separating} if~$b_i$ and~$c_{i+1}$ lie on different sides of~$K_i$ in~$(\mathcal E',\mathcal A')$ and \emph{non-separating} otherwise; see \Cref{fig:non-separating} for examples.

    If~$K_i$ is non-separating, we let~$R_i$ denote the region enclosed by~$K_i$ that does not contain~$b_i$ and~$c_{i+1}$; see \Cref{fig:non-separatingB}.  We have the following claim.

    \begin{claim}
        \label{cl:non-separating}
        For any two $K_i,K_j$ with~$i \ne j$ that are non-separating, we have~$R_i \cap R_j = \emptyset$ and for each non-separating~$K_i$ we have that~$R_i$ contains a vertex in~$X \setminus L$.  
    \end{claim}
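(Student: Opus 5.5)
The plan is to prove the two parts of the claim separately, using the planar drawing $(\mathcal E',\mathcal A')$ in which the edges of $\vec P$ have been inserted (possible by \Cref{cl:blocker-properties}). Two standing assumptions are used throughout: every vertex of $L$ is blocked, and no edge of $\mathcal B$ joins two vertices of the same path of $\mathcal P$.

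\emph{Disjointness.} First I would pin down how $K_i$ and $K_j$ meet. By property~\ref{it:common} of \Cref{cl:blocker-properties}, $\vec P$ and $\pi$ visit $a_1,\dots,a_t$ in the same order. Hence the subpaths $\pi[a_i,a_{i+1}]$ and $\pi[a_j,a_{j+1}]$, and likewise the subpaths $\vec P[a_i,a_{i+1}]$ and $\vec P[a_j,a_{j+1}]$, share at most an endpoint. Therefore two cycles $K_i$ and $K_j$ with $i<j$ intersect in at most the vertex $a_{i+1}$ (when $j=i+1$) together with common vertices of $\pi$ and $\vec P$. I would still need to check that no such extra common vertices exist, or that they do not cause the two cycles to cross. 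Assuming this, the curves do not cross, so $R_i$ and $R_j$ are either disjoint or one is nested in the other.

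To rule out nesting, I would use the definition of $R_i$. The vertex $c_{i+1}$ lies outside $R_i$, and the part of $\pi$ after $a_{i+1}$ is a connected curve that meets $K_i$ only at $a_{i+1}$. So the whole remainder of $\pi$ lies outside $R_i$, and symmetrically the part of $\pi$ before $a_i$ (starting with $b_i$) lies outside $R_i$. The same reasoning should apply to the continuation of $\vec P$ beyond its subpath on $K_i$, again modulo verifying that this continuation meets $K_i$ only at its endpoints. Consequently $K_j$, which is built from these outside portions, cannot lie inside $R_i$. By the symmetric argument $K_i$ cannot lie inside $R_j$, so $R_i\cap R_j=\emptyset$.

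\emph{Each $R_i$ contains a vertex of $X\setminus L$.} Suppose to the contrary that $R_i$ contains none. The first key step is a local geometric fact about an edge $ww'$ of $\vec P$: since $w'$ lies in the wedge $W(w)$, the edge $ww'$ is drawn inside $W(w)$. Hence the two augmentation edges of $w$ leave $w$ on opposite sides of the curve $\vec P$. So every internal vertex of $\vec P[a_i,a_{i+1}]$ sends exactly one path edge into $R_i$. The vertices of $\pi[a_i,a_{i+1}]$ have both of their $\pi$-edges on the boundary of $R_i$.

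Now consider any path $\pi'\in\mathcal P$ entering $R_i$. It cannot end inside $R_i$, because its attachments lie in $X\setminus L$. It cannot cross $K_i$. So every maximal piece of a path of $\mathcal P$ inside $R_i$ is an arc joining two vertices of $\vec P[a_i,a_{i+1}]$, and these arcs are pairwise non-crossing. Their endpoints therefore form a well-parenthesized (nested) matching along $\vec P[a_i,a_{i+1}]$. Because every internal vertex of $\vec P[a_i,a_{i+1}]$ is the endpoint of exactly one such arc, an innermost arc must join two vertices that are consecutive on $\vec P$. These two vertices belong to the same path of $\mathcal P$ and are joined by an edge of $\mathcal B$, contradicting our assumption.

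I expect the main obstacle to be the careful handling of the boundary cases in the second part. These are the endpoints $a_i$ and $a_{i+1}$, and the possibility that some vertex of $\vec P[a_i,a_{i+1}]$ is itself in $X\setminus L$; the latter can only happen for the last vertex of $\vec P$, which we may avoid by choosing the index range appropriately. A further delicate point is making the opposite-sides argument rigorous from the definition of $W(w)$, in particular when $w'$ lies on the horizontal boundary line of the wedge.
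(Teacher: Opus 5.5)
Your proposal is correct and matches the paper's proof in both parts. Disjointness comes from property~7 of the blocker-graph claim together with the fact that the portions of $\pi$ beyond $a_{i+1}$ (resp.\ before $a_j$) stay outside $R_i$ (resp.\ $R_j$). The second part is the same non-crossing perfect-matching argument on the interior vertices of $\vec P[a_i,a_{i+1}]$, whose innermost pair gives a $\mathcal B$-edge inside one path of $\mathcal P$.

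The point you left open is settled by taking $a_1,\dots,a_t$ to be all common vertices of $\vec P$ and $\pi$. Then, by property~7, no interior vertex of $\vec P[a_i,a_{i+1}]$ lies on $\pi$. Your opposite-sides argument usefully makes explicit the ``exactly one edge into $R_i$'' fact that the paper's perfect-matching step uses tacitly.
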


    \begin{claimproof}[Proof of claim]
        For the first statement, assume without loss of generality that $i < j$.
It follows from property \ref{it:common} of \Cref{cl:blocker-properties} that $K_i$ and $K_j$ are either disjoint or $j=i+1$ and they share only~$a_{i+1}$.
        In the former case, since~$K_i$ and~$K_j$ are non-separating, we have that~$\pi[a_{i+1},a_j]$ is a path that connects the boundaries~$R_i$ and~$R_j$ and lies outside both regions.  
        It follows that~$R_i \cap R_j =\emptyset$. 
        In the latter case, vertex $c_{i+1}$ lies on the boundary of $R_j$ and $c_{i+1}$ lies outside of $R_i$ since $K_i$ is non-separating.
        Since $R_i$ and $R_j$ only share~$a_{i+1}$, it follows that~$R_i \cap R_j = \emptyset$.

        For the second part of the claim, assume that~$R_i$ contains no vertex of~$X \setminus L$.  Therefore, no path in~$\mathcal P$ has an attachment that lies in~$R_i$.
        Thus, every path in~$\mathcal P$ that enters~$R_i$ does so via an internal vertex of ~$\vec P' := \vec P[a_i,a_{i+1}]$ and it also leaves~$R_i$ via a (different) internal vertex of~$P'$. 
        Moreover, every interior vertex $u$ of $\vec P'$ is incident to a subpath of a path of~$\mathcal P$ in $R_i$, because $u$ is a valley that is blocked by its successor on $\vec P'$ and thus $u$ has an incident augmentation edge inside $R_i$.
        The subpaths of paths in~$\mathcal P$ that contain a vertex in~$R_i$ thus induce a perfect matching~$M$ on the interior vertices of~$\vec P'$.  
        Let~$e=uv \in M$ be such that its endpoint have minimum distance on~$\vec P'$.  Since the paths entering (and leaving) $R_i$ are pairwise non-crossing, the endpoints of the edges in~$M$ are pairwise non-alternating along~$\vec P'$, and thus~$u$ and~$v$ are in fact adjacent in~$\vec P'$; see \Cref{fig:non-separatingC}.  This, however, contradicts the assumption that no two vertices of the same path in~$\mathcal P$ are adjacent in~$\mathcal B$.
    \end{claimproof}

    By this claim, for each non-separating cycle~$K_i$, we have that~$R_i$ contains a vertex from~$|X \setminus L|$.  Since these regions are pairwise disjoint, it follows that there are at most~$|X \setminus L|$ non-separating cycles~$K_i$.  It remains to bound the number of separating cycles~$K_i$.

    Let~$i$ be such that both~$K_i$ and~$K_{i+1}$ are separating. 
    Let~$B_i$ denote the connected region bounded by $K_i$ and $K_{i+1}$; see \Cref{fig:separatingA}.  
We have the following claim.

    \begin{claim}
        \label{cl:separating}
        Let~$i \ne j$ be such that~$K_{i}, K_{i+1}, K_j$ and $K_{j+1}$ are separating. Then~$B_i \cap B_j = \emptyset$.  If there is an $i$, $1 \le i \le t-2$ such that~$K_i$ and~$K_{i+1}$ are separating and~$B_i$ contains no vertex in~$X \setminus L$, then there exists a subinstance~$X'$ of~$\mathcal I$ with $|X'| < |X|$ that admits a tidy solution.
    \end{claim}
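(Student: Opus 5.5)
The claim has two parts, a disjointness statement and a reduction statement, and I will prove them separately.

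\emph{Disjointness.} By property~\ref{it:common} of \Cref{cl:blocker-properties}, $\vec P$ and $\pi$ visit $a_1,\dots,a_t$ in the same order. Hence the cycles $K_i$ and $K_j$ share at most a single vertex $a_{i+1}$, namely when $j=i+1$, and are otherwise disjoint. Since $K_i$ is separating, $\pi$ crosses $K_i$: it enters on the side of $b_i$ and leaves on the side of $c_{i+1}$.
\begin{itemize}
\item I would first argue that the cycles $K_1,\dots,K_{t-1}$ are nested along the part of $\pi$ between them, in the sense that the remainder of the construction after $a_{i+1}$ lies on the side of $K_i$ containing $c_{i+1}$. This uses the planarity of $(\mathcal E',\mathcal A')$ and the fact that $\vec P$ is a single curve.
\item Consequently, $B_i$ is exactly the annulus between $K_i$ and $K_{i+1}$.
\item Annuli belonging to different indices lie between different consecutive pairs of the nested sequence of boundaries, so they meet at most in boundary points. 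Since the regions are taken open, they are disjoint.
\end{itemize}
I would state this with the same case distinction as in \Cref{cl:non-separating}: $j=i+1$, where the regions share only $K_{i+1}$, versus $j\ge i+2$.

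\emph{Reduction, setup.} Assume $B_i$ contains no vertex of $X\setminus L$. Then no path of $\mathcal P$ has an attachment in $B_i$. Every path of $\mathcal P$ meeting $B_i$ therefore enters and leaves through interior vertices of $\vec P[a_i,a_{i+1}]$ or of $\vec P[a_{i+1},a_{i+2}]$. As in the proof of \Cref{cl:non-separating}, each interior vertex of these subpaths is a valley blocked by its successor, so it has an augmentation edge into $B_i$. The subpaths inside $B_i$ therefore induce a perfect non-crossing matching on these interior vertices, together with the relevant parts of $\pi$.

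\emph{Reduction, shortcut case.} Suppose some matched pair has both endpoints on the same side, say both on $\vec P[a_i,a_{i+1}]$. A pair of minimal distance is then adjacent on $\vec P$. This makes an edge of $\mathcal B$ join two vertices of the same path of $\mathcal P$, which we already excluded, or allows the shortcut described before \Cref{cl:non-separating}. We may therefore assume that every path crosses $B_i$ from $\vec P[a_i,a_{i+1}]$ to $\vec P[a_{i+1},a_{i+2}]$, and $\pi[a_i,a_{i+1}]$ does so as well. Because the matching is non-crossing, these crossings appear in the same order along both sides.

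\emph{Reduction, replacement case.}
\begin{itemize}
\item Delete all vertices strictly inside $B_i$, together with the interior vertices of $\pi[a_i,a_{i+1}]$.
\item Replace each crossing subpath by a single $y$-monotone augmentation edge between its entry and exit vertices.
\item Replace $\pi[a_i,a_{i+1}]$ by a direct edge from $a_i$ toward its continuation.
\end{itemize}
Routing these edges is the main obstacle. I would route them through the strip formed by the wedges of the $\vec P$-vertices, that is, the regions $S_0,S_1,S_2$ of \Cref{fig:separatingB}. The argument would go as follows:
\begin{itemize}
\item The wedges are empty by the choice of a closest blocker.
\item The entry vertices are monotone in level along $\vec P[a_i,a_{i+1}]$, and likewise the exit vertices along $\vec P[a_{i+1},a_{i+2}]$.
\item By the shrinking property, each entry vertex lies below its matched exit vertex. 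This lets us draw nested $y$-monotone curves that run parallel to $\vec P$ and then cross over.
\end{itemize}

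\emph{Conclusion.} Clusters stay connected, because each path of $\mathcal P$ is one cluster and only its interior vertices, which have degree $0$, are removed while its endpoints stay joined. Tidiness is preserved, because no ears or ear-covering edges lie in $B_i$. Finally, $\pi[a_i,a_{i+1}]$ has an interior vertex, so $|X'|<|X|$.
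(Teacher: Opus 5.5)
Your proposal is correct and follows the paper's proof essentially step by step. The steps match: disjointness of the $B_i$ comes from a separating cycle putting the parts of the construction before and after it on opposite sides, and the inner vertices of $\vec P[a_i,a_{i+1}]$ and $\vec P[a_{i+1},a_{i+2}]$ get a perfect non-crossing matching once same-side pairs are excluded by adjacency on $\vec P$. The matching and the edge $a_ia_{i+1}$ are then rerouted through $S_0\cup S_1\cup S_2$.

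One small fix: the fact that entry vertices lie below exit vertices comes from the strict level-monotonicity of $\vec P$ (property~2 of the blocker-graph claim), not from the shrinking property. It gives the stronger form you actually need to route every chord past $a_{i+1}$: all inner vertices of $\vec P[a_i,a_{i+1}]$ lie below $\ell(a_{i+1})$, and all vertices of $\vec P[a_{i+1},a_{i+2}]$ lie at or above it.
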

    \begin{claimproof}[Proof of claim]
        For the first statement, assume without loss of generality that $i < j$.
        Recall from the proof of \Cref{cl:non-separating} that $K_i$ and $K_j$ are either disjoint or $j = i + 1$ and they share vertex $a_{i+1}$.
        Moreover, recall that $B_i$ is bounded by $K_i$ and $K_{i+1}$.
        Therefore, the boundaries of $B_i$ and $B_j$ are either disjoint (if $j > i + 2$), they share vertex $a_{i+2}$ (if $j = i + 2$) or they share cycle $K_{i+1}$ (if $j = i + 1$).
        Hence $B_i$ and $B_j$ are disjoint or one is contained in the other.
        However, since the cycle $K_{i+1}$ separates $b_{i+1}$ (which lies on the boundary of $B_i$) from $c_{i+1}$ (which either lies on the boundary of $B_j$ or on a subpath of $\pi$ that connects the boundaries of $B_i$ and $B_j$ but is disjoint from the regions themselves), containment is not possible.
        We therefore have that~$B_i \cap B_j = \emptyset$.

        For the second statement, assume that~$K_i$ and~$K_{i+1}$ are separating and~$B_i$ contains no vertex in~$X \setminus L$.  Thus, no path in~$\mathcal P$ has an attachment in~$B_i$.  Therefore, every path in~$\mathcal P$ that enters~$B_i$ enters it via an intenal vertex of~$\vec P[a_i,a_{i+1}]$ or~$\vec P[a_{i+1},a_{i+2}]$ and also leaves it again via such a vertex.  Consider the maximal subpaths of paths in~$\mathcal P$ that contain no vertex in the exterior of $B_i$.  These paths are pairwise non-crossing and each of them connects two vertices of~$\vec P[a_i,a_{i+2}]$ that are distinct from~$a_i,a_{i+1},a_{i+2}$.  Let~$Q$ be such a path and let~$x$ and~$y$ denote its endpoints.  If they both belong to~$\vec P[a_i,a_{i+1}]$, then all other paths that enter the region via a vertex in $\vec P[x,y]$ also leave it via vertex in this path and similarly as in the proof of \Cref{cl:non-separating} above, it follows that there is a path that enters and leaves the region via two vertices~$x',y'$ that are adjacent on~$\vec P$.  Since no two adjacent vertices of~$\mathcal B$ may belong to the same path in~$\mathcal P$, this is a contradiction.  Similiarly, we can argue that no subpath can enter and leave the region~$B_i$ via~$\vec P[a_{i+1},a_{i+2}]$.  Therefore, the subpaths that traverse the region~$B_i$ induce a perfect matching $M$ between the inner vertices of~$\vec P[a_i,a_{i+1}]$ and the inner vertices of~$\vec P [a_{i+1},a_{i+2}]$.  Moreover, by planarity, the matching $M$ is non-crossing in the sense that for each pair of edges $e,e' \in M$ the order of their endpoints is the same along both paths.

\newcommand{\eps}{\varepsilon}
    We now show how to reroute the paths that traverse the region~$B_i$.  To this end, let~$\eps > 0$ be sufficiently small so that the disk $D$ with radius~$\eps$ around~$a_{i+1}$ contains no other vertices in the drawing and all edges that intersect~$D$ end at~$a_{i+1}$. Let~$S_0$ be the wedge of~$D$ that is enclosed by the last edge of~$\vec P[a_i,a_{i+1}]$ and the last edge of~$\pi[a_i,a_{i+1}]$ and does not intersect any other edge in its interior.   
    Let~$S_1$ be an $\eps$-strip around $\vec P[a_i,a_{i+1}]$ in $B_i$, i.e., $S_1$ is the maximal connected subregion of $B_i$ that contains the drawing of~$\vec P[a_i,a_{i+1}]$ in~$(\mathcal E',\mathcal A')$ on its boundary and only contains points $p$ whose horizontal projection onto $\vec P[a_i,a_{i+1}]$ has distance at most $\varepsilon$ from $p$; see \Cref{fig:separatingB}.  Analogously, let~$S_2$ be the corresponding $\eps$-strip in $B_i$ for $\vec P[a_{i+1},a_{i+2}]$.  Note that the boundaries of~$S_1$ and~$S_2$ intersect precisely in~$a_{i+1}$.
    Moreover, recall that $(\mathcal E, \mathcal A)$ does not contain the edges of $\vec P$ we temporarily added in $(\mathcal E', \mathcal A')$.
    Therefore, the region $S_0 \cup S_1 \cup S_2$ in $(\mathcal E, \mathcal A)$ is homeomorphic to a disk whose boundary contains a~$y$-monotone curve~$\rho_1$ that starts at~$a_i$ and contains the vertices of~$\vec P[a_i,a_{i+1}]$ except for~$a_{i+1}$ and a $y$-monotone curve~$\rho_2$ that starts at~$a_{i+1}$ and contains all vertices of $\vec P [a_{i+1},a_{i+2}]$ except for~$a_{i+2}$.  Moreover, for any two edges~$e,e' \in M$, their endpoints appear in the same order along~$\rho_1$ and~$\rho_2$.
    Further, $S_0 \cup S_1 \cup S_2$ is horizontally convex.  Since the vertices on~$\rho_1$ lie strictly below the vertices on~$\rho_2$ and the matching~$M$ is non-crossing, we can remove all vertices in the interior of $B_i$ as well as all interior vertices of~$\pi[a_i,a_{i+1}]$ from $(\mathcal E, \mathcal A)$ and instead draw the matching~$M$ and an edge connecting~$a_i$ to~$a_{i+1}$ crossing-free in the interior of~$S_0 \cup S_1 \cup S_2$; see \Cref{fig:separatingC}. 
    We thus have found a subinstance~$X' \subseteq X$ that admits a tidy solution.  Moreover, we have~$|X'| < |X|$ since~$\pi[a_i,a_{i+1}]$ contains at least one internal vertex that is a peak due to the shrinking property.
    \end{claimproof}

    By \Cref{cl:non-separating}, there are at most $|X \setminus L|$ distinct $i \in \{1, \dots, t-1\}$ for which $K_i$ is non-separating.
    Therefore, since $t = 3\cdot (|X \setminus L|+1)$, there are at least $t - 2 - |X \setminus L
    | = 2|X \setminus L| + 1$ distinct $i \in \{1, \dots, t-2\}$ such that $K_i$ is separating, and for at most $|X \setminus L|$ of them $K_{i+1}$ is non-separating.
    Hence there are at least $2|X \setminus L| + 1 - |X \setminus L| = |X \setminus L| + 1$ distinct $i \in \{1, \dots, t-2\}$ such that $K_i$ and $K_{i + 1}$ are separating.
    By \Cref{cl:separating}, there consequently exists an $i \in \{1, \dots, t-2\}$ such that $K_i$ and $K_{i + 1}$ are separating and $B_i$ contains no vertex of $X \setminus L$.
    Since \Cref{cl:separating} then also guarantees the existence of a smaller subinstance that admits a tidy solution, this concludes the proof.
\end{prooflater}

We now use these solution normalizations to show that every yes-instance of \yclp also has a bounded-size subinstance that admits a tidy solution.

\begin{figure}
    \begin{subfigure}{.5\textwidth}
        \centering
        \includegraphics[page=1]{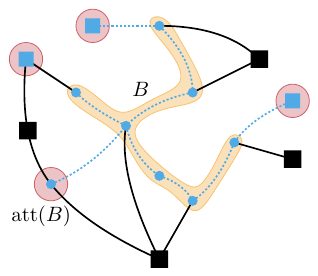}
        \subcaption{}
        \label{fig:bridgeA}
    \end{subfigure}
    \begin{subfigure}{.5\textwidth}
        \centering
        \includegraphics[page=2]{graphics/pdf/Bridge.pdf}
        \subcaption{}
        \label{fig:bridgeB}
    \end{subfigure}
    \caption{\textbf{(a)} A bridge $B$ (orange) together with its cluster attachments $\att(B)$ (red).
    Vertices of the core are represented as squares.
    \textbf{(b)} The extension~$\hat B$ of $B$.}
    \label{fig:bridge}
\end{figure}

\begin{figure}
    \begin{subfigure}{.34\textwidth}
        \centering
        \includegraphics[page=1]{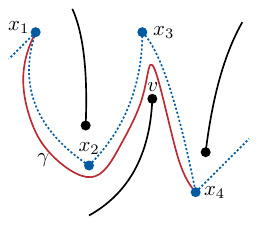}
        \subcaption{}
        \label{fig:shortcutA}
    \end{subfigure}
    \begin{subfigure}{.32\textwidth}
        \centering
        \includegraphics[page=2]{graphics/pdf/shortcut.pdf}
        \subcaption{}
        \label{fig:shortcutB}
    \end{subfigure}
    \begin{subfigure}{.32\textwidth}
        \centering
        \includegraphics[page=3]{graphics/pdf/shortcut.pdf}
        \subcaption{}
        \label{fig:shortcutC}
    \end{subfigure}
    \caption{Shortcutting an induced augmentation path $\pi$ (blue dashed edges) that consists of isolated vertices and does not satisfy the shrinking property. \textbf{(a)} Vertex $x_4$ lies on a lower level than the successor $x_2$ of a highest vertex $x_1$. Hence a curve $\gamma$ (red) can be drawn between $x_1$ and $x_4$ such that $\gamma$ does not intersect $\ell(x_1)$ and $\ell(x_4)$.
    \textbf{(b)} A corresponding solution where the curve $\gamma$ is $y$-monotone. 
    Path $\pi$ can thus be shortcut by replacing $\gamma$ with an augmentation edge $x_1x_4$ \textbf{(c)}.
    }
    \label{fig:shortcut}
\end{figure}
\begin{restatable}\restateref{lm:boundedSize}{lemma}{lmBoundedSize}
\label{lm:boundedSize}
  Let $\mathcal I=(G, \ell, T)$ be an instance of \yclp and let $C$ be its core vertex set.
  If $\mathcal I$ is a yes-instance, then $G$ has a subinstance $X$ with a tidy solution such that $|X| \leq f(|C|, |T|)$ for some polynomial function $f$.
\end{restatable}
\begin{proofsketch}
    First observe that $V(G)$ is a subinstance of $\mathcal I$ with a tidy solution.
    Let $X$ be the subinstance of $\mathcal I$ obtained by exhaustively applying the solution normalizations 1 and 3--6 from \Cref{rr:deg1,rr:transversal,rr:ears-2,rr:deg-1-paths,rr:isolated} and let $(\mathcal E, \mathcal A)$ be a corresponding tidy solution that minimizes the number of augmentation edges.
    Since vertices of degree at least 3 are contained in $C$ and the number of vertices of degree~2 is bounded by \Cref{cor:no-deg2}, it remains to bound the set $V_{\le 1}$ of vertices of degree at most 1.

    We decompose the graph into \emph{bridges}, which are maximal subsets of $V_{\le 1}$ connected by augmentation edges.
    Due to the minimality of the solution, every bridge $B$ is a tree and belongs to a single cluster~$\mu$.
    The \emph{cluster attachments $\att(B)$} are the vertices of $X \setminus V_{\le 1}$ that belong to $\mu$ and are adjacent to a vertex of $B$ in $\mathcal E$ or $\mathcal A$; see \Cref{fig:bridgeA}.
    The \emph{extension} $\hat B$ of $B$ is the subgraph of $\mathcal E \cup \mathcal A$ formed by the vertices $B \cup \att(B)$, all augmentation edges between vertices of $B$, and all (augmentation) edges between $B$ and $\att(B)$; see \Cref{fig:bridgeB}.
    
    Due to \Cref{rr:deg1} and the minimality of the solution, we can show that the number of bridges is bounded and that the extension of each bridge is a tree with a bounded number of leaves and nodes of aug-degree at least~3.
    What remains is a bounded number of (potentially long) induced augmentation paths.
    By \Cref{rr:deg-1-paths}, each one of them contains a bounded number of pendant vertices and thus the number of pendant vertices is also bounded.
    These pendant vertices further decompose all remaining isolated vertices into a bounded-size set~$\mathcal P'$ of induced augmentation paths that consist of isolated vertices.
    If one of these paths~$\pi$ contains two non-adjacent vertices $a, b$ such that $a$ lies on the maximum level of $\pi$ and every vertex that lies between $a$ and $b$ in $\pi$ lies on a strictly higher level than $b$, then we can use a known redrawing-argument~\cite[Lemma 5]{BrucknerR21} to shortcut $\pi$ and obtain a tidy solution of a smaller subinstance; see \Cref{fig:shortcut}.
    After splitting each path of $\mathcal P'$ at a vertex on the maximum level of the path, we obtain a bounded-size set $\mathcal P$ of disjoint, non-adjacent induced augmentation paths that have the shrinking property and consist of isolated vertices $L$.
    Since $|X \setminus L|$ and $|\mathcal P|$ are bounded,~$|L|$ is also bounded by \Cref{rr:isolated} and thus the same holds for $|X|$.\end{proofsketch}
\begin{prooflater}{proofBoundedSize}
First observe that $V(G)$ is a subinstance of $\mathcal I$ with a tidy solution.
    Let $X$ be the subinstance of $\mathcal I$ obtained by exhaustively applying the solution normalizations in \Cref{rr:deg1,rr:transversal,rr:ears,rr:ears-2,rr:deg-1-paths} and let $(\mathcal E, \mathcal A)$ be a corresponding tidy solution. By \Cref{cor:no-deg2}, the number of degree-2 vertices in $X$ is bounded by a polynomial of $|C|$ and $|T|$.
    Since all vertices of degree at least 3 are contained in~$C$, it only remains to bound the size of the set~$V_{\le 1}$ that contains all vertices of~$X \setminus C$ that have degree at most~$1$.

    From now on, we assume that $(\mathcal E, \mathcal A)$ minimizes the number of augmentation edges among all tidy solutions.
    Note that augmentation edges incident to vertices in~$V_{\le 1}$ are not needed to cover any ears of $V(G) \setminus X$.
    The minimality of $(\mathcal E, \mathcal A)$ thus implies that every vertex of $V_{\le 1}$ is only incident to augmentation edges of its own cluster and no such vertex is contained in a cycle of $(\mathcal E \cup \mathcal A)[T^{-1}(\mu)]$ for any cluster $\mu$.

A \emph{bridge} $B$ of $(\mathcal E, \mathcal A)$ is a maximal subset of $V_{\le 1}$ that is connected by augmentation edges.
    By the considerations above, all augmentation edges of $B$ belong to the same cluster $\mu$ and the augmentation edges connecting $B$ form a tree.
    The \emph{cluster attachments} $\att(B)$ of $B$ are the vertices of $X \setminus V_{\le 1}$ that belong to $\mu$ and are adjacent to a vertex of $B$ in $\mathcal E$ or $\mathcal A$; see \Cref{fig:bridgeA}.
    The \emph{extension} $\hat B$ of $B$ is the subgraph of $\mathcal E \cup \mathcal A$ formed by the vertex set $B \cup \att(B)$, all augmentation edges between vertices of $B$, and all (augmentation) edges between $B$ and $\att(B)$; see \Cref{fig:bridgeB}.
    Since all vertices of $\hat B$ belong to $\mu$ and $\hat B$ contains no edges between vertices of $\att(B)$, note that the minimality of the solution again implies that $\hat B$ is a tree.

    \begin{claim}
    \label{cl:bridge-leaves}
        Every leaf of $\hat B$ is contained in $X \setminus V_{\le 1}$.
    \end{claim}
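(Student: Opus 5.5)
We argue by contradiction: suppose some leaf $w$ of $\hat B$ lies in $B \subseteq V_{\le 1}$, and show that $(\mathcal E, \mathcal A)$ could then be reduced further. Two facts set this up. First, $X$ is exhaustively reduced by \Cref{rr:deg1}, so every vertex of $V_{\le 1}$ has aug-degree at least~$1$. Second, by minimality of the solution, every augmentation edge at $w$ lies in $\hat B$: augmentation edges at vertices of $V_{\le 1}$ cover no ears, and they join $w$ only to vertices of its own cluster $\mu$, which lie either in $B$ or in $\att(B)$. Consequently the degree of $w$ in $\hat B$ is at least its aug-degree plus the number of its real edges to $\att(B)$.

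We then split by the degree of $w$ in $G$. If $w$ is isolated and a leaf of $\hat B$, its aug-degree is exactly~$1$. \Cref{rr:deg1} then applies to $w$, contradicting exhaustiveness.

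If $w$ is pendant with neighbor $w'$, then $w' \in C$, since $w'$ lies in the vertex cover. Being a leaf, $w$ has total degree~$1$ in $\hat B$, and its aug-degree is at least~$1$. We check the two possibilities for $T(w')$:
\begin{itemize}
\item If $T(w') = \mu$, then $w' \in \att(B)$ and the real edge $ww'$ belongs to $\hat B$. Together with the augmentation edge, $w$ has degree at least~$2$ in $\hat B$, so it is not a leaf.
\item If $T(w') \neq \mu$, then $w$ has aug-degree exactly~$1$ and lies in a different cluster from its neighbor. This is case (ii) of \Cref{rr:deg1}, which again contradicts exhaustiveness.
\end{itemize}

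The only delicate point is making sure every edge incident to $w$ in $\mathcal E \cup \mathcal A$ that stays inside cluster $\mu$ actually appears in $\hat B$. This holds because each such edge either joins $w$ to another vertex of $V_{\le 1}$, which then belongs to $B$ by maximality of the bridge, or joins $w$ to a vertex of $X \setminus V_{\le 1}$ in cluster $\mu$, which belongs to $\att(B)$ by definition. With this in place the degree comparison above is valid and the claim follows.
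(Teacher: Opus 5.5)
Your proof is correct and follows essentially the same route as the paper. Leafness, together with the minimality fact that all augmentation edges at the vertex lie in $\hat B$, bounds its aug-degree by~$1$. Solution Normalization~4 then forces the vertex to be a pendant of aug-degree~$1$ in the same cluster as its neighbor, and that neighbor is a cluster attachment, giving degree~$2$ in $\hat B$. Your case split and your check that every $\mu$-edge at $w$ lies in $\hat B$ simply spell out steps the paper leaves implicit.
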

    \begin{claimproof}[Proof of claim]
        Assume that there exists a leaf $v$ of $\hat B$ with $v \in V_{\le 1}$. 
        In particular, $v$ is not a cluster attachment of $B$ and $v$ is not adjacent to a vertex of $X \setminus V_{\le 1}$ that belongs to $\mu$ in $\mathcal E \cup \mathcal A$.  Thus~$v$ has aug-degree at most~$1$.  
        It follows from \Cref{rr:deg1} that $v$ is a pendant vertex of aug-degree~$1$.
        Moreover, since $v$ has aug-degree~1, its unique neighbor $v'$ in~$\mathcal E$ also belongs to $\mu$ by \Cref{rr:deg1}.
        Thus $v'$ is a cluster attachment of $B$ and also belongs to $\hat B$.
        But then $v$ has degree 2 in $\hat B$, a contradiction to the assumption that $v$ is a leaf of $\hat B$.
    \end{claimproof}

    We first bound the number of bridges in $(\mathcal E, \mathcal A)$.
    Since the number of clusters is $|T|$, it suffices to bound the number of bridges that belong to the same cluster $\mu$.
    Note that, if a tree has less than two leaves, every node of the tree is a leaf.
    Therefore, if the extension~$\hat B$ of a bridge $B$ has less than two leaves, it must be entirely contained in $X \setminus V_{\le 1}$ by \Cref{cl:bridge-leaves}, contradicting the definition of a bridge.
    Hence every bridge has at least two cluster attachments.
    Therefore, two bridges with the same set of cluster attachments immediately form a cycle in $(\mathcal E \cup \mathcal A)[T^{-1}(\mu)]$ that contains vertices in $V_{\le 1}$, which would contradict the minimality of the solution as argued above.
    Hence the number of distinct bridges that belong to cluster $\mu$ is bounded by the number of edges a planar graph with $|X \setminus V_{\le 1}|$ vertices can have, which is $O(|X\setminus V_{\le 1}|)$. 
    We therefore have $O(|X\setminus V_{\le 1}| \cdot |T|)$ bridges overall.
    Recall from above that $|X\setminus V_{\le 1}|$ is bounded by a polynomial of $|C|$ and $|T|$.

    Consider a bridge $B$ of $(\mathcal E, \mathcal A)$ and let $\mu$ denote the corresponding cluster.
    Since every leaf of $\hat B$ is contained in $X \setminus V_{\le 1}$ by \Cref{cl:bridge-leaves}, the number of leaves in $\hat B$, and thus also the number of its nodes with aug-degree at least 3, is linear in $|X \setminus V_{\le 1}|$.  Thus, removing all vertices of $\hat B$ that have aug-degree~$1$ or aug-degree at least~$3$ leaves $O(|X \setminus V_{\le 1}|)$ paths whose vertices all belong to~$V_{\le 1}$ and have aug-degree~$2$ in each bridge.
    By \Cref{rr:deg-1-paths}, each of these paths contains a polynomial number of pendant vertices.
    
    Thus, the isolated vertices in~$X \setminus C$ induce in~$(\mathcal E,\mathcal A)$ a polynomial number of induced augmentation paths.  We split each such path~$\pi$ at a vertex~$v$ of maximum level on~$\pi$ into two edge-disjoint subpaths that meet at~$v$.  We denote the set of these paths by~$\mathcal P'$ and we denote the vertices we used to split the paths by~$S$.  Note that $|\mathcal P'|$ is bounded by a polynomial of $|C|$ and $|T|$ and each path in~$\mathcal P'$ has an endpoint that is on the highest level of that path.

    Let~$\pi \in \mathcal P'$.  If~$\pi$ contains a vertex that is not a peak or valley, we can immediately shortcut and obtain a smaller instance with a tidy solution.  We may hence assume that~$\pi$ alternates between peaks and valleys.  Let~$v_1,\dots,v_t$ denote the vertices of~$\pi$ starting with the endpoint~$v_1$ on the highest level.  By definition, we have~$\ell(v_i) \le \ell(v_1)$ for all $1 \le i \le t$.  Moreover, if there is a vertex~$v_i$ with~$\ell(v_i) < \ell(v_{2})$, we can draw a curve~$\gamma$ below~$\pi$ that starts at~$v_1$ and closely follows~$\pi$ until it reaches the first valley~$v_j$ that lies on a level below~$\ell(v_2)$ so that~$\gamma$ intersects the levels~$\ell(v_1)$ and~$\ell(v_j)$ only with its endpoints.   By~\cite[Lemma 5]{BrucknerR21}, we can thus modify~$(\mathcal E,\mathcal A)$ in such a way that the edge~$v_1v_j$ can be inserted crossing-free and in a $y$-monotone fashion.  Then, removing all vertices of~$\pi[v_2,v_{j-1}]$ yields a smaller subinstance with a tidy solution.  We may hence assume that this does not occur and therefore~$v_2$ is a lowest vertex on~$\pi[v_2,v_t]$.  By an analogous argument, we can then prove that~$v_3$ is a highest vertex on~$\pi[v_2,v_t]$.  And thus, by induction, it follows that~$\pi$ has the shrinking property.

    Let now~$\mathcal P$ denote the vertex-disjoint paths obtained from~$\mathcal P'$ by removing the vertices in~$S$. 
    Then, $\mathcal P$ consists of a polynomial number of induced augmentation paths, each of which satisfies the shrinking property.  Let~$L$ be the set of vertices that are contained in a path from~$\mathcal P$.  Observe that all vertices in~$L$ have aug-degree~$2$ and thus the endpoints of each path in~$\mathcal P$ connect to two distinct vertices in~$X \setminus L$. Moreover, since we have bounded all vertices that are not contained in $L$ above, $|X \setminus L|$ is bounded by a polynomial. By applying the reduction of \Cref{rr:isolated}, it follows that $|L|$ is also bounded by a polynomial.  Therefore~$X = (X \setminus L) \cup L$ has polynomial size in $|C|$ and $|T|$.
\end{prooflater}

\subsection{Blueprints and Realizations}
It follows from \Cref{lm:boundedSize,lm:tidySubinstance} that an instance $\mathcal I$ of \yclp is a yes-instance if and only if it has a bounded-size subinstance that admits a tidy solution.
Note that we cannot enumerate all such bounded-size subinstances in FPT time.
Instead, we use so-called blueprints, which are instances of \yclp that coincide with $\mathcal I$ on its core graph.
Given a blueprint along with a solution for it, we show that we can test efficiently whether we can map the non-core vertices of the blueprint to corresponding vertices of $\mathcal I$ to obtain a tidy solution (of the subinstance formed by the image of this map).
This allows us to test the existence of a subinstance that admits a tidy solution by enumerating all bounded-size blueprints and their embeddings in FPT time and testing whether one such blueprint corresponds to a bounded-size subinstance of $\mathcal I$ that admits a tidy solution.

More formally, let $\mathcal I=(G, \ell, T)$ be an instance of \yclp with core vertex set~$C$.
A \emph{blueprint} of $\mathcal I$ is a yes-instance $\mathcal B = (H,\ell',T')$ of \yclp together with a solution $(\mathcal E, \mathcal A)$,~where
\begin{enumerate}[(B1)]
  \item $H$ is a graph that contains $G[C]$ as an induced subgraph,
  \item $\ell'$ contains no empty levels and for every pair $u, v \in C$, it is $\ell(u) \leq \ell(v)$ if and only if $\ell'(u) \leq \ell'(v)$, and
  \item the clusterings of $C$ given by $T$ and $T'$ are the same.
\end{enumerate}
These three properties express that $\mathcal I[C]$ can be obtained from $\mathcal B[C]$ by inserting empty levels.
The fact that no level of $\mathcal B$ is empty implies that the number of blueprints whose underlying graph $H$ is fixed is bounded by a function of the size of $H$ and $|T|$.
In particular, given a tidy solution of some subinstance $X$ of $\mathcal I$, removing all empty levels of $\mathcal I[X]$ yields a blueprint.

We now reverse this process and ask, for a given blueprint $\mathcal B$, whether it can be obtained in this way from a tidy solution $(\mathcal E_X, \mathcal A_X)$ of a subinstance $X$.
If this is the case, we say that $(\mathcal E_X, \mathcal A_X)$ realizes $\mathcal B$.
More formally, a \emph{realization} of $\mathcal B$ is a map $\Phi\colon V(H) \to V(G)$ such that
\begin{enumerate}[\label=(R1)]
  \item $\Phi$ is an isomorphism between $H$ and $G[\Phi(V(H))]$,
  \item $\Phi\vert_{C}$ is the identity function,
  \item\label{it:r2} for every pair $u, v \in V(H)$ it is  $\ell'(u) \leq \ell'(v)$ if and only if $\ell(\Phi(u)) \leq \ell(\Phi(v))$, and,
  \item\label{it:rcluster} for every $v \in V(H)$ it is $T'(v) = T(\Phi(v))$.
\end{enumerate}
In other words, $\Phi$ is an isomorphism between $H$ and the subgraph of $G$ induced by the image of~$H$ that additionally respects the order of the levels and the clustering.
For the solution~$(\mathcal E, \mathcal A)$ of $\mathcal B$, we let $\Phi((\mathcal E, \mathcal A))$ denote the embedding obtained from $(\mathcal E, \mathcal A)$ by renaming the vertices of $\mathcal B$ to vertices in $\mathcal I$ according to the isomorphism $\Phi$.
Since Property~(R\ref{it:r2}) guarantees that order of vertices along the $y$-axis remains consistent with (the leveling of) $\mathcal I$ and Property~(R\ref{it:rcluster}) guarantees that the clustering remains consistent, we obtain the following.

\begin{proposition}
\label{lm:realization}
  Let $\mathcal I$ be an instance of \yclp, let~$\mathcal B$ be a blueprint with solution $(\mathcal E, \mathcal A)$ and let~$\Phi$ be a realization of~$\mathcal B$. 
  Then $\Phi((\mathcal E, \mathcal A))$ is a solution of $\mathcal I[\Phi(V(B))]$.
\end{proposition}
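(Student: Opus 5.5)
We will verify directly that the renamed drawing $\Phi((\mathcal E, \mathcal A))$ satisfies every requirement of a monotone-cl-planar drawing of $\mathcal I[\Phi(V(H))]$. Write $Y = \Phi(V(H))$. Since $\mathcal B$ contains $G[C]$ and $\Phi\vert_C$ is the identity by (R2), we have $C \subseteq Y$, so $Y$ is indeed a subinstance. By (R1), $\Phi$ is an isomorphism between $H$ and $G[Y]$. Hence renaming vertices in $\mathcal E$ gives a drawing of exactly the graph $G[Y]$, with no edges missing and no extra edges. Renaming the endpoints of $\mathcal A$ likewise gives a set of augmentation curves between vertices of $Y$.

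The first key step is the leveling. The drawing $(\mathcal E,\mathcal A)$ places vertices on the levels of $\ell'$, while we need them on the levels of $\ell\vert_Y$. By (R3), the map sending level $\ell'(u)$ to level $\ell(\Phi(u))$ is well defined and strictly order-preserving. It is well defined because $\ell'(u)=\ell'(v)$ holds if and only if $\ell(\Phi(u))=\ell(\Phi(v))$, which follows from applying (R3) in both directions.

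We then apply a monotone homeomorphism of the $y$-axis to the plane, $(x,y)\mapsto(x,h(y))$. Here $h$ is strictly increasing and maps each used level of $\ell'$ to the corresponding level of $\ell$, interpolating piecewise linearly in between. This transformation has three effects:
\begin{itemize}
\item it preserves crossing-freeness, since it is a homeomorphism;
\item it preserves strict $y$-monotonicity of real edges, and plain $y$-monotonicity (including horizontality) of augmentation edges;
\item it keeps the left-to-right order on each level.
\end{itemize}
So the result is a level-planar drawing of $G[Y]\cup\mathcal A$ with respect to $\ell\vert_Y$.

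The second step is the clustering. Property (R4) states $T'(v)=T(\Phi(v))$, so each cluster of $\mathcal B$ is mapped bijectively onto the corresponding cluster of $\mathcal I[Y]$. Cluster-connectivity of $\mathcal E\cup\mathcal A$ is a purely combinatorial property of the graph together with the cluster labels, so it transfers. The same holds for the condition that no cycle induced by a cluster encloses a foreign vertex: enclosure is preserved under homeomorphisms, and cluster membership is preserved by (R4).

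The main subtlety is making sure the releveling is legitimate. (R3) must yield a well-defined monotone bijection between the occupied levels, and distinct $\ell'$-levels must not collapse onto one $\ell$-level; both follow from the equivalence in (R3). Empty levels of $\ell$ between the images cause no problem, since edges simply stretch across them. Note also that the proposition asks only for a solution, not a tidy one, so no covering conditions need to be checked.
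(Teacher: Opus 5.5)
Your proof is correct and follows the same route as the paper, which justifies the proposition only by noting that (R3) keeps the vertical order of the vertices consistent with the leveling of $\mathcal I$ and that (R4) keeps the clustering consistent. Your explicit relevelling map $(x,y)\mapsto(x,h(y))$, together with the observation that (R1) yields exactly the graph $G[\Phi(V(H))]$, simply spells out that brief argument in detail.
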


A realization is \emph{tidy} if the solution $\Phi((\mathcal E, \mathcal A))$ is tidy.
The following lemma shows that we can efficiently test whether a blueprint admits a tidy realization.
Roughly speaking, we use dynamic programming to determine the combinations $b, l$ of levels of $\mathcal B$ and $\mathcal I$ that admit a \emph{tidy partial realization} that realizes all vertices of $\mathcal B$ up to level $b$ by vertices from $\mathcal I$ up to level~$l$ with a sufficient number of covering edges for the remaining~ears; see \Cref{fig:blueprints} for an example.

\begin{figure}
    \begin{subfigure}{.48\textwidth}
        \centering
        \includegraphics[page=1]{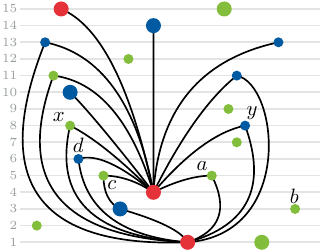}
        \subcaption{}
        \label{fig:blueprintA}
    \end{subfigure}
    \begin{subfigure}{.48\textwidth}
        \centering
        \includegraphics[page=2]{graphics/pdf/blueprints.pdf}
        \subcaption{}
        \label{fig:blueprintB}
    \end{subfigure}
    \newline
    \begin{subfigure}{.48\textwidth}
        \centering
        \includegraphics[page=3]{graphics/pdf/blueprints.pdf}
        \subcaption{}
        \label{fig:blueprintC}
    \end{subfigure}
    \begin{subfigure}{.48\textwidth}
        \centering
        \includegraphics[page=4]{graphics/pdf/blueprints.pdf}
        \subcaption{}
        \label{fig:blueprintD}
    \end{subfigure}
    \caption{
    \textbf{(a)} An instance $\mathcal I$ of of \yclp with three clusters. Large vertices represent vertices of the core.
    \textbf{(b)} A blueprint $\mathcal B$ of $\mathcal I$ along with a corresponding solution. Note that $\mathcal B$ contains all core vertices of $\mathcal I$, the remaining ``abstract'' vertices are illustrated as squares.
    \textbf{(c)} A tidy partial realization of the first six levels of $\mathcal B$ using vertices of the first ten levels of $\mathcal I$. The missing vertices of~$\mathcal I$ that are not part of the realization are greyed out.
    Note that the ears $x$ and $y$ (which are not part of the realization) are covered by the augmentation edges $e_1$ and $e_2$, respectively.
    \textbf{(d)} A realization of the entire blueprint, with missing vertices inserted into the solution.
    }
    \label{fig:blueprints}
\end{figure}

\begin{restatable}\restateref{lm:dp}{lemma}{lmDP}
\label{lm:dp}
  Let $\mathcal I = (G, \ell, T)$ be an instance of \yclp with core vertices $C$ and let $\mathcal B = (H,\ell',T')$ be a blueprint of $\mathcal I$ with solution $(\mathcal E, \mathcal A)$.  There is a polynomial-time algorithm to decide whether~$\mathcal B$ has a tidy realization.
\end{restatable}
\begin{prooflater}{proofDP}
    Let~$k$ denote the number of levels of~$\mathcal I$ and let~$l$ denote the number of levels of~$\mathcal B$.  For $0 \le i \le l$ let~$\mathcal B[i]$ denote the instance of \yclp consisting of only the vertices of level up to and including~$i$ in $\mathcal B$.

    Let~$0 \le k' \le k$ and~$0 \le l' \le l$.  A \emph{partial $(l',k')$-realization} of~$\mathcal B$ is a realization~$\Phi$ of~$\mathcal B[l']$ where $k'$ is the highest level of a vertex in the image of $\Phi$, that is $k' = \max(\{0\} \cup \{\ell(\Phi(v)) \mid v \in V(\mathcal B[l'])\})$.
    Let~$s$ be a level of~$\mathcal I$ and let~$uv$ be an augmentation edge in $\mathcal A$. We say that \emph{$uv$ crosses level $s$ with respect to $\Phi$} if (i) $\Phi(u)$ is below $s$ and (ii) $v$ is either not mapped by~$\Phi$ or $\Phi(v)$ is above $s$.
    A \emph{cover assignment} is an injective map that assigns ears in $\mathcal I$ to augmentation edges in~$\mathcal A$ such that each ear is covered by the augmentation edge to which it is assigned. 
    A partial $(l',k')$-realization~$\Phi$ is \emph{tidy} if for every level $s \leq k'$ there exists a cover assignment from the ears on level $s$ that are not in the image of $\Phi$ to the augmentation edges that cross $s$ with respect to $\Phi$.

    Consider the subproblem that asks whether~$\mathcal I$ has a tidy partial~$(l',k')$-realization of~$\mathcal B$.  We store this information in an~$(l+1) \times (k+1)$-table $T$ so that each entry~$T[l',k']$, where~$0 \le l' \le l$ and~$0 \le k' \le k$, stores whether there exists a tidy partial~$(l',k')$-realization.  Observe that the empty function is a tidy partial~$(0,0)$-realization and that a tidy partial $(l,k)$-realization is a tidy realization of~$\mathcal B$.

    Recall that the vertices of $\mathcal I$ and $\mathcal B$ coincide on the core $C$ and that every vertex of $V(\mathcal I) \setminus C$ is only adjacent to vertices of $C$.
    For a vertex $v \in V(\mathcal B) \setminus C$, a vertex $u$ of $\mathcal I$ is \emph{suitable} if $N(u) = N(v)$ and $T'(u) = T(v)$.
    For a vertex $v \in C$, the only vertex of $\mathcal I$ that is \emph{suitable} is $v$ itself.

    \begin{claim}
        There exists a tidy partial~$(l',k')$-realization of~$\mathcal B$ if and only if there exists an injective mapping~$\Phi_{l'}$ that maps vertices on level~$l'$ of the blueprint to suitable vertices on level~$k'$, and there exists a level~$k^\star$ of~$\mathcal I$ with~$0 \le k^\star < k'$ such that the following properties are satisfied:
        \begin{enumerate}[(i)]
        \item\label{it:i} there is a tidy partial~$(l'-1,k^\star)$-realization of~$\mathcal I$,
        \item\label{it:ii} for each level~$k^\star < s < k'$ there exists a cover assignment from the ears on level~$s$ to the augmentation edges~$uv$ of~$\mathcal A$ with~$\ell'(u) \le l'-1$ and~$\ell'(v) \ge l'$, and 
        \item\label{it:iii} for level~$k'$ there exists a cover assignment from the ears on level~$s$ that are not in the image of $\Phi_{l'}$ to the augmentation edges~$uv$ from~$\mathcal A$ with~$\ell'(u) \le l'-1$ and~$\ell'(v) > l'$.
        \end{enumerate}
    \end{claim}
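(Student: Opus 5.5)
We prove the claim by splitting a tidy partial $(l',k')$-realization at level $l'$ of the blueprint and checking that the tidiness conditions decompose along this split.

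\emph{Forward direction.} Let $\Phi$ be a tidy partial $(l',k')$-realization. By (R\ref{it:r2}), all vertices on blueprint level $l'$ are mapped to one common level of $\mathcal I$. Since $\ell'$ has no empty levels, this level is the maximum, so it equals $k'$. We set $\Phi_{l'}$ to be the restriction of $\Phi$ to level $l'$; it is injective and maps to suitable vertices because $\Phi$ is an isomorphism fixing $C$ that respects the clustering. We set $\Phi' = \Phi|_{\mathcal B[l'-1]}$ and let $k^\star$ be the maximum level of its image. Strict monotonicity in (R\ref{it:r2}) gives $k^\star<k'$.

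We check that $\Phi'$ is tidy. Consider a level $s\le k^\star$. The ears on $s$ outside the image of $\Phi'$ are exactly those outside the image of $\Phi$, since $\Phi$ places nothing new below $k'$. The edges crossing $s$ are the same with respect to $\Phi$ and $\Phi'$, because both conditions (i) and (ii) in the crossing definition only depend on images below $k'$ or on non-mapped endpoints. So the cover assignment for $\Phi$ on level $s$ also works for $\Phi'$, which gives (i).

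For (ii), take a level $k^\star<s<k'$. An edge crossing $s$ with respect to $\Phi$ has $\Phi(u)$ below $s$, so $\ell'(u)\le l'-1$. Its other endpoint $v$ is either unmapped or mapped above $s$; in both cases $\ell'(v)\ge l'$. Restricting the existing cover assignment to these edges gives (ii). Condition (iii) follows the same way on level $k'$: an endpoint $v$ with $\ell'(v)=l'$ is mapped exactly to level $k'$, not above it, so the crossing edges there satisfy $\ell'(v)>l'$.

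\emph{Backward direction.} Given $\Phi'$ from (i) and $\Phi_{l'}$, we set $\Phi=\Phi'\cup\Phi_{l'}$.

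We first check that $\Phi$ is injective. The two parts use disjoint levels, since the image of $\Phi'$ lies at or below $k^\star<k'$.

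We next check that $\Phi$ is an isomorphism onto its image (R1). Non-core vertices of $\mathcal I$ and of $H$ are adjacent only to core vertices, and $\Phi$ is the identity on $C$. Suitability ($N(u)=N(v)$) therefore preserves all edges and non-edges. Two non-core vertices are never adjacent in either graph, so no further edges need checking.

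Properties (R2)–(R4) are immediate. In particular, the level order is respected because $k^\star<k'$, which also places all core vertices consistently.

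Finally, tidiness follows level by level:
\begin{itemize}
\item Levels $s\le k^\star$ inherit their cover assignments from $\Phi'$, since the crossing relation is unchanged there.
\item Levels strictly between $k^\star$ and $k'$ contain no mapped vertices, so all ears there are outside the image. The edges named in (ii) cross $s$ with respect to $\Phi$, so (ii) supplies the required assignment.
\item Level $k'$ is handled by (iii) in the same way.
\end{itemize}

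The main obstacle is matching the crossing definition to the exact index bounds $\ell'(v)\ge l'$ versus $\ell'(v)>l'$. In particular, we must treat correctly the edges between a level-$l'$ vertex and a lower one: their upper endpoint is mapped to $k'$ itself, so they do not cross level $k'$, while they do cross every intermediate level $s$.
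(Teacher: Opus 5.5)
Your proof is correct and follows essentially the same route as the paper: you restrict a tidy partial $(l',k')$-realization to blueprint level $l'$ and to $\mathcal B[l'-1]$ for the forward direction, take the union $\Phi'\cup\Phi_{l'}$ for the backward direction, and check tidiness separately on levels $s\le k^\star$, $k^\star<s<k'$ and $s=k'$, and your explicit bookkeeping of $\ell'(v)\ge l'$ versus $\ell'(v)>l'$ fills in what the paper only asserts follows ``directly from the definition.'' Like the paper, you tacitly read partial realizations as mapping every vertex to a suitable vertex (equal full neighborhoods in $G$), which is what your forward claim that $\Phi_{l'}$ is suitable and your backward claim that edges between $\mathcal B[l'-1]$ and core vertices on level $l'$ are preserved both rely on, since an isomorphism of $\mathcal B[l']$ (resp.\ $\mathcal B[l'-1]$) onto its image does not by itself control adjacencies to core vertices on higher blueprint levels.
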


    \begin{claimproof}[Proof of claim]
        Assume there exists a tidy partial~$(l', k')$-realization $\Phi$ of $\mathcal B$.
        Then the restriction $\Phi^\star$ of $\Phi$ to vertices of the levels $1, \dots, l'-1$ is a tidy partial~$(l'-1, k^\star)$-realization, where $k^\star$ is the highest level of a vertex in the image of $\Phi^\star$.
        Further, the restriction $\Phi$ to the vertices of level $l'$ is the desired injective mapping $\Phi_{l'}$.
        Note that $k^\star < k'$ by Property (R\ref{it:r2}) of the realization $\Phi$.
        Properties~(\ref{it:ii}) and~(\ref{it:iii}) follow directly from the definition of a tidy realization.

        For the other direction, assume we have have an injective mapping~$\Phi_{l'}$ that maps vertices on level~$l'$ of the blueprint to suitable vertices on level~$k'$, and level~$k^\star$ of~$\mathcal I$ with~$0 \le k^\star < k'$ such that Properties~(\ref{it:i})--(\ref{it:iii}) are satisfied.
        Let $\Phi'$ be a tidy partial~$(l'-1, k^\star)$-realization of $\mathcal I$.
        Let $\Phi$ be the unique function defined on $V(\mathcal B[l'])$ that coincides with $\Phi_{l'}$ and $\Phi'$ on their respective domain.
        Observe that, since $\Phi_{l'}$ maps the vertices on level $l'$ to suitable vertices on level $k'$, $\Phi$ is an $(l',k')$-realization of $\mathcal B$.
        We now show that $\Phi$ is tidy.
        To this end, we show for every level $s \leq k'$ of $\mathcal B$ that there exists a cover assignment $m_s$ from the ears on level~$s$ that are not in the image of $\Phi$ to the augmentation edges that cross $s$ with respect to $\Phi$. First consider the case $s \leq k^\star$.
        Note that every augmentation edge of $\mathcal A$ that crosses a level $s \leq k^\star$ of $\mathcal I$ with respect to $\Phi'$ also crosses level $s$ with respect to $\Phi$.
        Therefore, the existence of such a cover assignment $m_s$ follows from the fact that such a cover assignment exists for $\Phi'$.

        Now consider the case $k^\star < s < k'$.
        Then Property~(\ref{it:ii}) yields the desired mapping $m_s$.
        Similarly, for $s = k'$, we obtain $m_s$ from Property~(\ref{it:iii}).
        Hence $\Phi$ is tidy, which concludes the proof of the claim.
    \end{claimproof}

    Note that whether Property (\ref{it:iii}) from the claim holds is independent of the choice of~$\Phi_{l'}$, since the number of ears in each equivalence class that need to be mapped only depends on the vertices on level $k'$ in~$\mathcal I$ and the vertices on level~$l'$ in~$\mathcal B$.  As these numbers can be easily compared, the existence of such a mapping~$\Phi_l'$ can be checked efficiently.  Moreover, given~$\Phi_l'$, Properties~(\ref{it:ii}) and~(\ref{it:iii}) are simple assignment problems that can be tested in polynomial time.  Therefore, assuming that the entries of table~$T$ whose sum of indices is less then~$l'+k'$ have been correctly determined, this allows us to determine the entry~$T[l',k']$ in polynomial time.
\end{prooflater}

Using \Cref{lm:dp}, we are finally able to prove our main result of this section.
\begin{theorem}
\label{lem:yclp-fpt}
  \yclp is FPT w.r.t. the vertex cover number plus the number of clusters.
\end{theorem}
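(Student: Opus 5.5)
The plan is to combine \Cref{lm:tidySubinstance}, \Cref{lm:boundedSize}, and \Cref{lm:dp} into an enumerate-and-test algorithm. Given $\mathcal I=(G,\ell,T)$, we first compute a vertex cover $Q$ of size $k=\vc(G)$ in FPT time (e.g., by the standard $O(2^k\cdot n)$ branching). If $G$ is not planar, we reject. We also reject if some level contains three ears with the same neighborhood (\Cref{obs:no-three-equivalent-same-ell}). Otherwise we compute the core vertex set $C$ as defined above. Since planar graphs have $O(|Q|)$ vertices of degree at least $3$, we obtain $|C|\in O(|Q|^3\cdot|T|^2)$, which is bounded by a function of the parameter. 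The quantity $N:=f(|C|,|T|)$ from \Cref{lm:boundedSize} is then also bounded by a polynomial in $k+|T|$.

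Correctness follows from a chain of equivalences. By \Cref{lm:tidySubinstance}, $\mathcal I$ is a yes-instance if and only if some subinstance admits a tidy solution. By \Cref{lm:boundedSize}, if such a subinstance exists, then one exists with $|X|\le N$. For such an $X$ with tidy solution $(\mathcal E_X,\mathcal A_X)$, removing the empty levels of $\mathcal I[X]$ yields a blueprint $\mathcal B$ whose solution is the induced one. The inclusion map is then a tidy realization of $\mathcal B$, since the realization preserves the level order and tidiness is a property of the renamed drawing. Conversely, if some blueprint has a tidy realization $\Phi$, then by \Cref{lm:realization} the drawing $\Phi((\mathcal E,\mathcal A))$ is a tidy solution of $\mathcal I[\Phi(V(H))]$. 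This vertex set contains $C$ by (R2), so it is a subinstance, and \Cref{lm:tidySubinstance} gives a yes-instance. Hence $\mathcal I$ is a yes-instance if and only if some blueprint on at most $N$ vertices admits a tidy realization.

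The algorithm therefore enumerates all blueprints $(H,\ell',T')$ with $|V(H)|\le N$ together with a solution $(\mathcal E,\mathcal A)$, and runs the polynomial-time test of \Cref{lm:dp} on each. The enumeration proceeds as follows.
\begin{itemize}
\item Choose a graph $H\supseteq G[C]$ on at most $N$ vertices with $G[C]$ induced, giving $2^{O(N^2)}$ choices.
\item Choose a leveling with no empty levels, hence at most $N$ levels and at most $N^N$ choices, and discard those violating the order condition (B2) on $C$.
\item Choose a clustering agreeing with $T$ on $C$, with at most $|T|^N$ choices; non-core vertices of a realization must lie in clusters of $\mathcal I$, so only existing clusters matter.
\item Choose a monotone-cl-planar embedding $(\mathcal E,\mathcal A)$, i.e., for each level the left-to-right order of vertices and edge-level crossings, including augmentation edges.
\end{itemize}
For the last item, only a bounded number of augmentation edges is needed: the augmentation graph together with $H$ can be assumed planar on $N$ vertices. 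So the number of embeddings is bounded by a function of $N$, and each candidate can be verified for level planarity, cluster connectivity, and the enclosure condition in time depending only on $N$. The total running time is $g(k+|T|)\cdot\mathrm{poly}(|V(G)|)$.

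The main obstacle I expect is this last enumeration step: arguing rigorously that the solutions of a bounded blueprint form a finite, effectively enumerable set. Augmentation edges may be horizontal, and equivalence is defined via edge-level crossing orders, so I would first argue that w.l.o.g.\ $\mathcal A$ is a simple planar edge set on $V(H)$, hence of size $O(N)$. The orders on each of the at most $N$ levels are then permutations of $O(N^2)$ objects, giving a bound of $N^{O(N^3)}$. A minor additional point is that the blueprint derived from $X$ genuinely satisfies (B1)--(B3); this is immediate from the definitions.
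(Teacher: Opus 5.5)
Your proposal is correct and follows the paper's proof: enumerate all embedded blueprints whose graph has at most $f(|C|,|T|)$ vertices and test each with the dynamic program of \Cref{lm:dp}, with correctness following from \Cref{lm:tidySubinstance}, \Cref{lm:boundedSize} and \Cref{lm:realization}. The details you add are sound and fill in points the paper leaves implicit: computing the vertex cover, rejecting non-planar inputs so that the linear bound on degree-$\ge 3$ vertices (and hence on $|C|$) applies, taking cluster labels from those of $\mathcal I$ as (R4) requires, and bounding the number of augmentation edges so that the embedding enumeration is finite.
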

\begin{proof}
    Let $\mathcal I = (G, \ell, T)$ be an instance of \yclp with core vertices $C$.
    If $\mathcal I$ is a yes-instance, then it has a subinstance that admits a tidy solution and whose size does not exceed the bound $f(|C|, |T|)$ given in \Cref{lm:boundedSize}.
    As argued above, this determines a blueprint $\mathcal B$ such that (i) its graph has size at most $f(|C|, |T|)$ and (ii) it admits a tidy realization.
    Conversely, if an instance $\mathcal I$ has such a blueprint, it is a yes-instance by \Cref{lm:realization,lm:tidySubinstance}.
    
    Thus, to test whether an instance $\mathcal I$ of \yclp is a yes-instance, it suffices to test whether there exists a blueprint with these properties.
    Note that the number of embedded blueprints that satisfy property (i) is bounded, since blueprints contain no empty levels and the number of clustering-functions $T$ that represent distinct clusterings of a bounded-size graph is bounded.
    Given a blueprint $\mathcal B$ of $\mathcal I$ we use \Cref{lm:dp} to test whether $\mathcal B$ satisfies property~(ii). 
\end{proof}

\section{Conclusion}

We have extended previous hardness results for \yclp~\cite{fink_clustered_2024} and showed that the problem remains NP-complete even if all connected components have constant size and either the number of levels or the number of clusters is bounded.
This shows para-NP-hardness for most graph-structural parameters, but leaves the question whether the problem becomes tractable when parameterized by the vertex cover number.

In an effort towards answering this question, we gave an FPT-algorithm parameterized by the vertex cover number plus the number of clusters.
To handle the non-heredity of \yclp, we introduced the concept of solution normalizations and blueprints, which allow us to enumerate all sensible solutions in FPT time.
Interestingly, bounding the number of isolated vertices in solutions posed the biggest challenge.
Our main open question is whether the dependence on the number of clusters can be eliminated.
Here, a solution normalization to reduce the number of clusters alone would not suffice, since 
our dynamic program in \Cref{lm:dp} would have to additionally track suitable mappings from the clusters of the blueprint to the clusters of the input instance, which is significantly more difficult. It is also not clear whether the number of clusters can be bounded using conventional reduction rules.

\bibliography{references}

\end{document}